\newtheorem{lemma}{Lemma}
\newtheorem{proposition}{Proposition}
\newtheorem{corollary}{Corollary}
\newcommand{\diagmatdown}
    {\mathbin{\rotatebox[origin=c]{9}{$\diagdown$}}}
\begin{document}

\title{Spectrum and Coherence Properties of the Current-Mirror Qubit}

\author{D.~K.~Weiss}
\email{dkweiss@u.northwestern.edu}
\address{Department of Physics and Astronomy, 
Northwestern University, Evanston, Illinois 60208, USA}
\author{Andy~C.~Y.~Li}
\altaffiliation[Present Address: ]{Fermi National Accelerator Laboratory, Batavia, Illinois 60510, USA}
\address{Department of Physics and Astronomy, Northwestern University, Evanston, Illinois 60208, USA}
\author{D.~G.~Ferguson}
\address{Northrop Grumman Corporation, Linthicum, Maryland 21090, USA}
\author{Jens Koch}
\address{Department of Physics and Astronomy, Northwestern University, Evanston, Illinois 60208, USA}

\date{\today}

\begin{abstract}
The current-mirror circuit [A. Kitaev, arXiv:cond-mat/0609441 (2006)] exhibits a robust ground-state degeneracy and wave functions with disjoint support for appropriate circuit parameters. In this protected regime, Cooper-pair excitons form the relevant low-energy excitations. Based on a full circuit analysis of the current-mirror device, we introduce an effective model that systematically captures the relevant low-energy degrees of freedom, and is amenable to diagonalization using Density Matrix Renormalization Group (DMRG) methods. We find excellent agreement between DMRG and exact diagonalization, and can push DMRG simulations to much larger circuit sizes than feasible for exact diagonalization. We discuss the spectral properties of the current-mirror circuit, and predict coherence times exceeding 1\,ms in parameter regimes believed to be within reach of experiments.
\end{abstract}

\maketitle

\section{Introduction}

Coherence times of superconducting qubits have improved by several orders of magnitude since the seminal experiments with charge qubits at NEC Labs and at Saclay \cite{Nakamura1999,Bouchiat1998}. Nowadays, state-of-the-art transmon and fluxonium qubits have relaxation and dephasing times on the order of 100 $\mu$s \cite{Kjaergaard2019,Devoret2013}. However, many error-correcting codes necessary for a fully fault-tolerant quantum computer, require even longer coherence times \cite{Doucot2012}. A promising avenue for achieving such improvements is implementation of circuits with intrinsic protection from decoherence \cite{Kitaev2006,Brooks2013,Ioffe2002,Gladchenko2009,Bell2014,Groszkowski2018,Smith2019}. One example that has recently gained attention is the $0-\pi$ qubit: it is protected from both pure dephasing and relaxation due to a ground-state degeneracy that is robust with respect to external noisy parameters, and due to qubit states well separated in configuration space \cite{Brooks2013, Dempster2014, Groszkowski2018,DiPaolo2018}. 

Here, we analyze the precursor to the $0-\pi$ qubit, the superconducting current-mirror device, which was introduced by Kitaev \cite{Kitaev2006} as one of the first proposals for an intrinsically protected superconducting qubit. We extend Kitaev's analysis by performing a full circuit analysis of the device, derive an effective model to describe the circuit and discuss in detail the processes that lift the degeneracy. We show that the device possesses enhanced pure dephasing and relaxation times, $T_{\phi}$ and $T_1$, compared to the current state-of-the-art \cite{Kjaergaard2019}. The device, shown in Fig.~\ref{CM}, consists of two linear arrays of Josephson junctions that are capacitively coupled to form a ladder. One end of the ladder is then twisted and connected to the other end, thus producing a M\"obius strip. As prompted by Kitaev, we focus on the regime where the capacitive coupling between the two Josephson junction arrays is large compared to the junction and ground capacitances. In this case (and assuming near-zero offset charges on each node), the low-energy excitations of the circuit consist of Cooper-pair excitons, each formed by a Cooper pair and a Cooper-pair hole across one of the big capacitors. In exciton-hopping among adjacent rungs of the ladder, these two charges move together and generate counter-propagating currents -- hence the name ``current mirror" \cite{Kitaev2006}. Charge excitations other than excitons occur at much higher energies, and will be referred to as agitons. For large circuit size, i.e., in the limit of a large number of big capacitors $N\gg1$, this device exhibits a (near) ground-state degeneracy that is robust with respect to local perturbations. We derive an effective model capturing the low-energy behavior of the circuit, allowing for a quantitative analysis of the spectrum and coherence times of the current mirror.

Our paper is structured as follows. In Sec.~\ref{fullcirc}, we perform a full circuit analysis of the current-mirror device and introduce the exciton/agiton coordinates we use for the remainder of the paper. In Sec.~\ref{effmodelsec} we introduce our effective model, in which we eliminate the high energy degrees of freedom of the circuit. We successfully simulate the resulting Hamiltonian using DMRG techniques in Sec.~\ref{sec:spectrum}, and compare those results to exact diagonalization when applicable. In Sec.~\ref{sec:coherence} we discuss our results for the coherence times of our circuit, and in Sec.~\ref{concsec} we conclude our paper.

\section{Full Circuit Analysis}
\label{fullcirc}

The current-mirror device, shown in Fig.~\ref{CM}, consists of a linear array of $2N$ identical Josephson junctions, wrapped to form the edge of a M\"obius strip. The upper and lower local edges of the strip are connected by capacitors (capacitance $C_B$), forming a series of rungs. In a device with $2N$ junctions, there are $N$ such rungs. Each node in the circuit additionally has a small capacitance to ground (capacitance $C_g$, not shown in Fig.~\ref{CM}).  The Josephson junctions are characterized by their junction capacitance $C_J$ and Josephson energy $E_J=\hbar I_C/(2e)$, where $I_C$ is the junction critical current. The scope of this paper is concerned with the ideal current-mirror circuit in which all junctions, all capacitor rungs, and all ground capacitances are assumed identical, leaving the effects of disorder in circuit elements for future studies.
\begin{figure}
\centering
\includegraphics[width=0.85\columnwidth,angle=0]{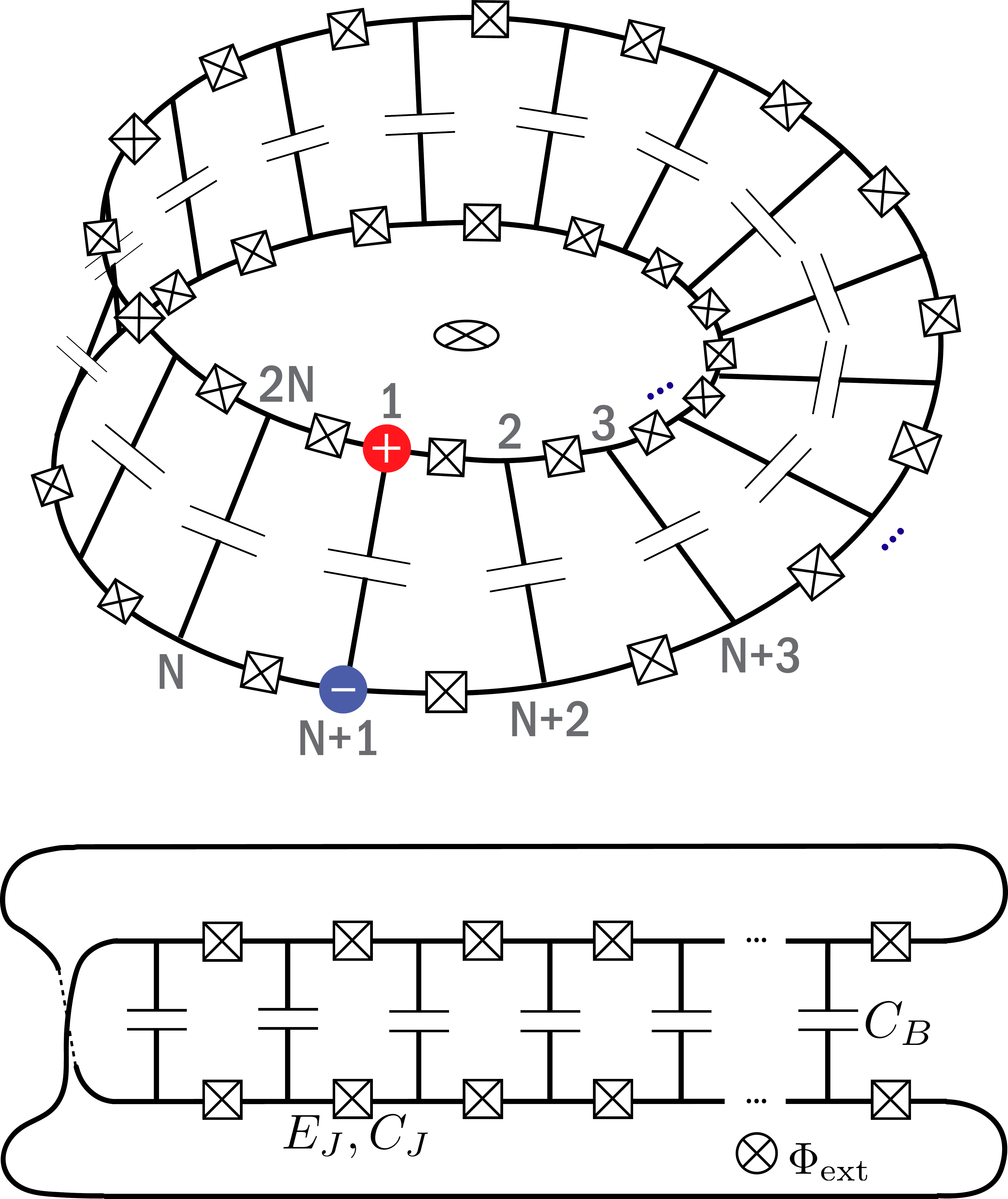}
\caption{\label{CM} The current-mirror circuit, consisting of an array of Josephson junctions ($E_J,\,C_J$) capacitively coupled ($C_B$) to form a M\"obius strip. Low-energy excitations in the intended parameter regime are Cooper-pair excitons, as the one shown on rung 1. An external flux $\Phi_\text{ext}$ penetrating the interior of the M\"obius strip may be used to tune the spectrum of the circuit.}
\end{figure}
The circuit may be described in terms of generalized flux variables $\Phi_{i}$ for each node $1\le i \le 2N$. To simplify notation, we employ reduced flux variables $\phi_i=2\pi\Phi_i/\Phi_0$ where $\Phi_0=h/(2e)$ is the superconducting flux quantum. The circuit Lagrangian for the current mirror is obtained in the standard way \cite{Devoret2004}, and reads
\begin{align}
\label{fullmodelL}
\mathcal{L}&=\frac{1}{2} \left(\frac{\Phi_{0}}{2\pi}\right)^2
\sum_{i,j=1}^{2N}\dot{\phi_i}\mathsf{C}_{ij}\dot{\phi_j}
+\hbar\sum_{j=1}^{2N}\dot{\phi_j}n_{gj} \\ \nonumber
&\quad+ \sideset{}{'}\sum_{j=1}^{2N}E_J\cos(\phi_{j+1}-\phi_{j}-\phi_{\text{ext}}/2N),
\end{align}
where $n_{gj}$ denotes the offset charge associated with node $j$. An external magnetic flux $\Phi_{\text {ext}} = \phi_\text{ext} \Phi_0/2\pi$ may be applied to the interior of the M\"obius strip, as shown in Fig.~\ref{CM}. The primed sum in Eq.\ \eqref{fullmodelL} is understood modulo $2N$, i.e., $\phi_{2N+1}$ is identified with $\phi_1$. Finally, the capacitance matrix $\mathsf{C}$ is given by \footnote{Indices $i,\,j$ are interpreted modulo $2N$.}
\begin{align}
\label{CapMatrix}
\mathsf{C}_{ij}=\begin{cases}
		C_{g}+2C_{J}+C_{B}, &i=j \\
        -C_{J},             &i=j\pm 1 \\
        -C_{B},             &i=j\pm N \\
        0,                  &\textnormal{otherwise}.
\end{cases}
\end{align}

We obtain the Hamiltonian via Legendre transform and quantize it in the usual way by promoting coordinates and conjugate momenta $\phi_{j},n_{j}$ to operators that satisfy the commutation relations $[\phi_{j},n_{k}]=i\delta_{j,k}$ \footnote{The usual caveat applies: due to periodic boundary conditions, $\phi_j$ is strictly speaking ill-defined. A more rigorous, but essentially equivalent way to proceed is to base quantization on the operator $e^{i\phi_j}$.}. This results in the circuit Hamiltonian
\begin{align}
\label{Ham}
H&= \sum_{i,j=1}^{2N}4(\mathsf{E_{C}})_{ij}
    (n_{i}-n_{gi})(n_{j}-n_{gj}) \\ \nonumber
    &\quad -\sideset{}{'}\sum_{j=1}^{2N}E_{J}
    \cos(\phi_{j+1}-\phi_{j}-\phi_{\text{ext}}/2N),
\end{align}
where we have introduced the charging-energy matrix $(\mathsf{E_{C}})_{ij}=e^2\mathsf{C}^{-1}_{ij}/2$. 

Obtaining the spectrum of Eq.~\eqref{Ham} is challenging due to the large number of circuit degrees of freedom. Exact diagonalization in the charge basis is feasible for $N\le 3$ in the relevant parameter regime. For circuit sizes $N>3$, memory requirements for storing the Hamiltonian in the charge basis exceed 1 terabyte when using a charge cutoff of $n_{c}=10$ for each node. The predicted intrinsic protection from relaxation and dephasing, however, specifically requires circuits of large size $N\gg1$ \cite{Kitaev2006}. Therefore, a reduced effective model is needed for a numerical analysis of the circuit's spectrum and coherence properties. Such an effective model can be obtained when focusing on the concrete parameter regime affording decoherence protection, and is further motivated by Ref.~\onlinecite{Lee2003}. In this regime, the circuit is predicted to develop a robust ground-state degeneracy, rendering the circuit insensitive to dephasing channels. Further, the two lowest eigenstates $\ket{0}$ and $\ket{\pi}$ are found to have nearly disjoint support in the multi-dimensional configuration space defined by $\phi_1,\ldots,\phi_{2N}$. As a consequence, all transition matrix elements $\mel{0}{M}{\pi}$ of local operators $M$ are exponentially suppressed, and the circuit is protected from transitions among the computational basis states. 

The protected parameter regime is established by a hierarchical ordering of energy scales. First, the Josephson energy is required to be smaller than the junction charging energy, such that Cooper-pair tunneling can be treated perturbatively. Second, the capacitances $C_B$ are expected to be so large that the associated charging energy forms the smallest charging energy in the hierarchy. The protected parameter regime is thus summarized by the conditions
\begin{equation}
\label{protected}
N\gg1,\quad  E_{J}<E_{C_J},\quad  E_{C_B} < E_{C_J} < E_{C_{g}},
\end{equation}
where $E_{C_a}=e^2/2C_a$. If offset charges vanish and Josephson tunneling is neglected, the lowest-energy excitations are Cooper-pair excitons consisting of a Cooper pair and a Cooper-pair hole \cite{Kitaev2006,Lee2003}, positioned across a big-capacitor rung, see Fig.~\ref{CM} for an example. Exciton charging energies are of the order of $E_{C_B}$. We call non-exciton charge excitations ``agitons". These incur significantly higher charging energies proportional to $E_{C_{J}}$ or $E_{C_{g}}$. The separation into a low-energy exciton subspace and a high-energy agiton subspace is the key ingredient for the development of the effective model to be described next.

\section{Effective Model}
\label{effmodelsec}

The essential idea behind the effective model is to integrate out high-energy agiton excitations in a perturbative treatment of the Cooper-pair tunneling. The tunneling of a single Cooper pair converts an exciton into an agiton. Since the tunneling amplitude $\sim E_J$ is small compared to the agiton charging energy $\sim E_{C_J},\, E_{C_g}$,  such an agiton state will take on the role of a virtual intermediate state before the exciton is restored via a second Cooper-pair tunneling step, see Fig.\ \ref{tunneling_fig}(a).
\begin{figure}
\centering
\includegraphics[width=\columnwidth]{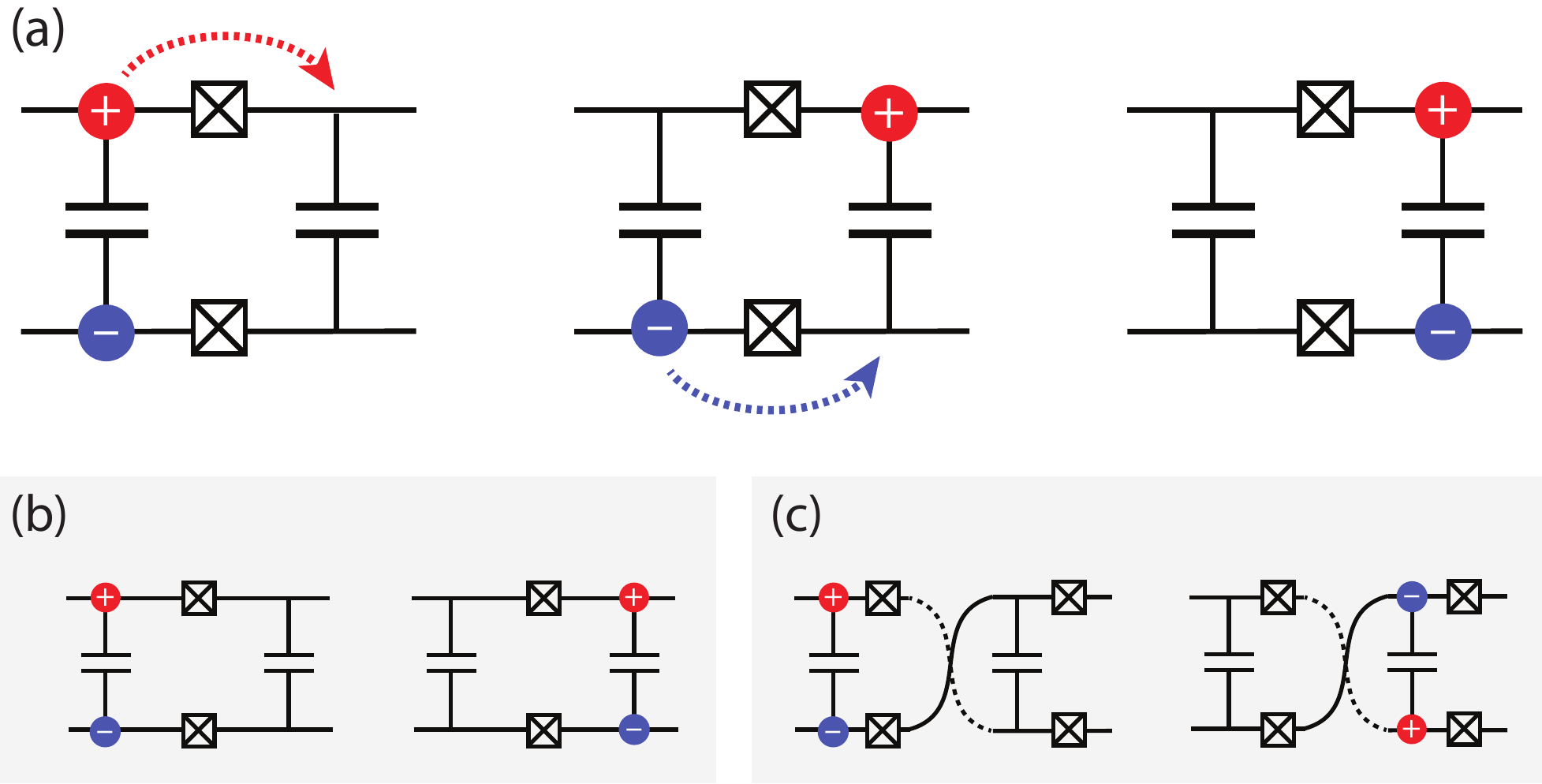}
\caption{\label{tunneling_fig} (a) The tunneling of an exciton from one rung to the next: the Cooper-pair hole tunnels to the right, followed by its partner Cooper pair (or vice versa). (b) Effect of exciton tunneling among regular sites. (c) Exciton tunneling across the stitching point reverses the exciton sign, and corresponds to exciton annihilation on both adjacent rungs. 
}
\end{figure}

\subsection{Exciton and Agiton Variables}
We formalize this idea by introducing appropriate exciton (-) and agiton (+) variables 
\begin{equation}
\label{phipm}
\phi_{j}^{\pm}=\phi_{j} \pm \phi_{N+j},
\end{equation}
and corresponding exciton and agiton charge-number operators
\begin{equation}
\label{npm}
n_{j}^{\pm}=\frac{n_{j}\pm n_{N+j}}{2}.
\end{equation}
Here, the variable index now ranges from $j=1,\ldots, N$. These definitions render commutators among the new operators canonical, i.e.,
\begin{equation}
    [\phi_j^\sigma, n_k^\tau] = i \delta_{jk}\delta_{\sigma\tau},\qquad \sigma,\tau=\pm.
\end{equation}
We note that quantum numbers of the exciton charge and agiton charge operators obey a simple constraint: for each rung $j$ the two charge quantum numbers must be either both integer, or both half-integer. For instance, a single exciton on rung $k$ corresponds to $n_{k}^{-}=1, n_{k}^{+}=0$. If, on the other hand, rung $k$ hosts a single Cooper pair on site $k$ (no charge on site $k+N$), then we have $n_{k}^{-}=-1/2$ and $n_{k}^{+}=-1/2$. The low-energy subspace spanned exclusively by exciton states has quantum numbers $\{n_{j}^{+}=0\}$ and integer-valued $n_{j}^{-}$. 

In terms of exciton and agiton variables, the full circuit Hamiltonian is
\begin{align}
\label{pmHam}
&H = \sum_{\sigma=\pm}\sum_{i,j=1}^{N}4\left(\mathsf{E_{C}^{\sigma}}\right)_{i,j} \left(n_{i}^{\sigma}-n_{gi}^{\sigma}\right) \left(n_{j}^{\sigma}-n_{gj}^{\sigma}\right) \\ \nonumber
&-\sum_{j=1}^{N-1}2E_{J}
    \cos(\tfrac{1}{2} [\phi_{j+1}^{+}-\phi_{j}^{+}-
    \tfrac{\phi_{\text{ext}}}{N}])
    \cos(\tfrac{1}{2}[\phi_{j+1}^{-}-\phi_{j}^{-}])\\\nonumber
&-2E_{J}\cos(\tfrac{1}{2}[\phi_{1}^{+}-\phi_{N}^{+}-
    \tfrac{\phi_{\text{ext}}}{N}])
		\cos(\tfrac{1}{2}[\phi_{1}^{-}+\phi_{N}^{-}]),
\end{align}
where offset charges $n_{gi}$ have been transformed in a way analogous to Eq.\ \eqref{npm}. Note that the potential energy, comprised of the terms in lines 2 and 3, is $4\pi$-periodic in the   $\phi_{j}^{\pm}$ variables. This is a direct result of $n_{j}^{\pm}$ taking on half-integer values. Following this coordinate transformation, the charging energy matrix $\mathsf{E_{C}}$ is brought into block-diagonal form by ordering variables according to $\phi_1^-,\ldots,\phi_N^-,\,\phi_1^+,\ldots,\phi_N^+$:
\begin{equation}
\widetilde{\mathsf{E_{C}}}=\left(\begin{matrix}
                \mathsf{E_{C}^{-}} & 0 \\
                0 & \mathsf{E_{C}^{+}}
\end{matrix}\right).
\end{equation}
Analytical expressions for these exciton and agiton charge matrices $\mathsf{E_{C}^{-}}$ and $\mathsf{E_{C}^{+}}$ are obtained in Appendix~\ref{appendix:ChargeEn}.

To eliminate the high-energy subspace composed of states including agiton excitations, $n_{j}^{+}\not=0$, we employ a Schrieffer-Wolff transformation treating the Josephson part of the Hamiltonian perturbatively. This yields an effective Hamiltonian describing the low-energy subspace involving only the $n_{j}^{-}$ exciton variables.

\subsection{Exciton Tunneling}
The leading-order effect of Josephson tunneling within the exciton subspace is the tunneling of excitons between neighboring rungs, see Fig.\ \ref{tunneling_fig}(a). As a specific example, consider an exciton localized on rung $j$, which is an eigenstate of the unperturbed Hamiltonian ($E_J=0$). Turning on a weak $E_{J}$ allows for Cooper pairs to tunnel. The tunneling of a single Cooper pair or Cooper-pair hole to rung $j+1$ incurs a high charging-energy cost due to biasing of the junction capacitor. We can compute the relevant charging energies from Eq.\ \eqref{pmHam}. The initial single-exciton state and intermediate agiton state are $\ket{\text{e}} = |n_j^-=1\rangle$, and
\begin{equation}
    \ket{\text{a}\pm} = |n_j^-=n_{j+1}^-=\tfrac{1}{2},\, n_j^+=\mp\tfrac{1}{2},\, n_{j+1}^+=\pm\tfrac{1}{2}\rangle,
\end{equation}
where $\pm$ corresponds to tunneling of a Cooper-pair hole or a Cooper pair, respectively, and we are only showing the non-zero charge quantum numbers. The corresponding charging energies are
$E_\text{e}=\mathcal{O}(E_{C_B})$ and 
\begin{align}
\label{Edenom}
&E_\text{a}^\pm = 
    2(E_{{\text C}0}^{+}-E_{{\text C}1}^{+})\\ \nonumber
    &\pm \sideset{}{'}\sum_{m=1}^{N}4\left[(\mathsf{E_{C}^{+}})_{m,j}-
        	(\mathsf{E_{C}^{+}})_{m,j+1}\right]n_{gm}^{+} + \mathcal{O}(E_{C_B}),
\end{align}
Here, we have defined $E_{{\text C}0}^{+}=(\mathsf{E_{C}^{+}})_{j,j}$ and $E_{{\text C}1}^{+}=(\mathsf{E_{C}^{+}})_{j,j\pm 1}$; see Appendix~\ref{appendix:ChargeEn} for explicit expressions of the charging-energy matrix elements $(\mathsf{E_{C}^{+}})_{j,k}$. Since $E_\text{a}^\pm=\mathcal{O}(E_{C_J},E_{C_g})$, we can approximate the energy cost by the agiton contribution alone, $\Delta E_j^\pm=|E_\text{e}-E_\text{a}^\pm|\approx E_\text{a}^\pm$.

The effective small parameters governing the perturbation theory are thus $E_{J}/\Delta E_{j}^{\pm}$. Through the second-order process depicted in Fig.~\ref{tunneling_fig}(a), an exciton can tunnel from one rung to the neighboring rungs. In such a process, the high-energy agiton subspace ($n_{j}^{+}\neq0$) is only accessed via intermediate virtual states. The resulting second-order effective Hamiltonian in the exciton subspace  is
\begin{align}
\label{effHam}
H^{(2)}=&\sum_{i,j=1}^{N}4
    \left(\mathsf{E_{C}^{-}}\right)_{i,j}
    \left(n_{i}^{-}-n_{gi}^{-}\right)
    \left(n_{j}^{-}-n_{gj}^{-}\right) \\ \nonumber
-&\sum_{j=1}^{N-1}J_{j}\cos(\phi_{j+1}^{-}-\phi_{j}^{-})-
J_{N}\cos(\phi_{1}^{-}+\phi_{N}^{-}),
\end{align}
where the coefficients $J_{j}$ are the exciton tunneling rates, see Appendix \ref{appendix:Nthorder} for details. In the protected regime and for vanishing offset charges, the exciton tunneling rates have the uniform expression 
\begin{equation}
\label{simpleJ}
J=\frac{E_{J}^2}{2E_{C_{J}}}
    +\textstyle\mathcal{O}\left(\frac{C_{g}}{C_{J}},\frac{1}{N}\right),
\end{equation}
which is intuitive given the two-step hopping process ($\sim E_{J}^2$) with intermediate state energy penalty ($\sim 2E_{C_{J}}$).

The sign deviation in the final exciton hopping term of Eq.\ \eqref{effHam} occurs as a direct consequence of the M\"obius topology. Its origin can be understood  with the help of Fig.~\ref{tunneling_fig}, depicting the exciton tunneling process across the twist point. Generally, exciton tunneling is understood as exciton annihilation on one rung and creation on a neighboring rung, see Fig.~\ref{tunneling_fig}(b). However, tunneling across the twist point results in either exciton annihilation or exciton creation on both rungs, see Fig.~\ref{tunneling_fig}(c). We note that the specific location of the twist point is irrelevant to the physics, as variables may be cyclically permuted. 

Inspection of the charging energies for excitons, $\mathsf{E_{C}^{-}}$, shows that the charge-charge interaction is relatively short-ranged for excitons in the protected regime. Using analytical results from Appendix \ref{appendix:ChargeEn}, we find the asymptotic expressions
\begin{align}\nonumber
&\left(\mathsf{E_{C}^{-}}\right)_{j,j}= \frac{e^2}{2C_{B}}+\textstyle\mathcal{O}\left(\frac{C_{J}}{C_{B}},\frac{C_{g}}{C_{B}},\frac{1}{N}\right), \\ \label{ECminus}
&\left(\mathsf{E_{C}^{-}}\right)_{j,j+1}= \frac{e^2C_{J}}{4C_{B}^2} +\textstyle\mathcal{O}\left(\frac{C_{J}}{C_{B}},\frac{C_{g}}{C_{B}},\frac{1}{N}\right), \\\nonumber
&\left(\mathsf{E_{C}^{-}}\right)_{1,N}= -\frac{e^2C_{J}}{4C_{B}^2} + \textstyle\mathcal{O}\left(\frac{C_{J}}{C_{B}},\frac{C_{g}}{C_{B}},\frac{1}{N}\right).
\end{align}
All other off-diagonal elements are strongly suppressed in higher powers of $C_{J}/C_{B}$.

We emphasize that the absence of the external flux $\phi_{\text{ext}}$ from the effective Hamiltonian \eqref{effHam} is not an error of omission. Remarkably, within the second-order computation the flux drops out exactly, see Appendix~\ref{appendix:Nthorder} for details. This indicates weak sensitivity of the circuit to flux, which is only re-established by higher-order terms, as we will demonstrate below. 

Integrating out the agiton degrees of freedom restores $2\pi$-periodicity of the potential energy in Eq.\ \eqref{effHam}, a direct consequence of the elimination of half-integer eigenvalues of $n_j^-$ in the exciton subspace. Inspection reveals that the potential energy $V(\phi_1^-,\ldots\phi_N^-)$ has the form of an $N$-dimensional double-well potential, with minima located at $\{\phi_{j}^{-}=0\}$ and $\{\phi_{j}^{-}=\pi\}$. The values of the two potential minima are identical, thus providing the basis for the (near-)degeneracy of ground and first excited eigenstates of the current-mirror circuit, as envisioned by Kitaev \cite{Kitaev2006}. Note that exact degeneracy of the minima leads to ground- and first-excited-state wave functions that lack disjoint support. Higher-order processes that lead to degeneracy breaking are therefore crucial for the current-mirror qubit to operate as intended. 

An important point to note in our application of perturbation theory is the assumption that charge frustration can be neglected. This can be seen explicitly in Eq.~\eqref{Edenom}, where energy denominators approach zero for $n_{gj}^{+}=1/2$. However, as noted in Refs.~\onlinecite{Lee2003,Kitaev2006}, while the nature of the low-energy excitations changes from Cooper-pair excitons to Cooper-pair and void excitons, the overall behavior of the circuit is not expected to change. 

\subsection{Degeneracy-Breaking Terms}

The leading-order processes that lift the degeneracy among the two potential minima correspond to the annihilation or creation of an odd number $m$ of excitons ($1\leq m \leq N$), see Fig.\ \ref{excitoncreation}. The circular circuit representation employed in that figure is topologically equivalent to the original M\"obius circuit, where it is understood that capacitive connections do not ``touch" each other at the center. The creation or annihilation of such excitons requires $N$-th order perturbative processes, as we show via a map to the so-called assignment problem on the circle \cite{Werman1986,Karp1975}, see Appendix ~\ref{appendix:KNsteps}. There, we further prove that leading-order exciton creation and annihilation leads to sign alternation, i.e., neighboring Cooper-pair charges alternate signs when moving around the circle, see Fig.\ \ref{excitoncreation}. Any process leading to the creation of an odd number of excitons that does not obey sign alternation is of higher order, and is therefore subdominant.

The derivation of the associated effective-Hamiltonian terms is discussed in Appendix~\ref{appendix:Nthorder}, and proceeds by taking the Schrieffer-Wolff transformation to $N$-th order.
\begin{figure}
    \centering
    \includegraphics[width=1.0\columnwidth]{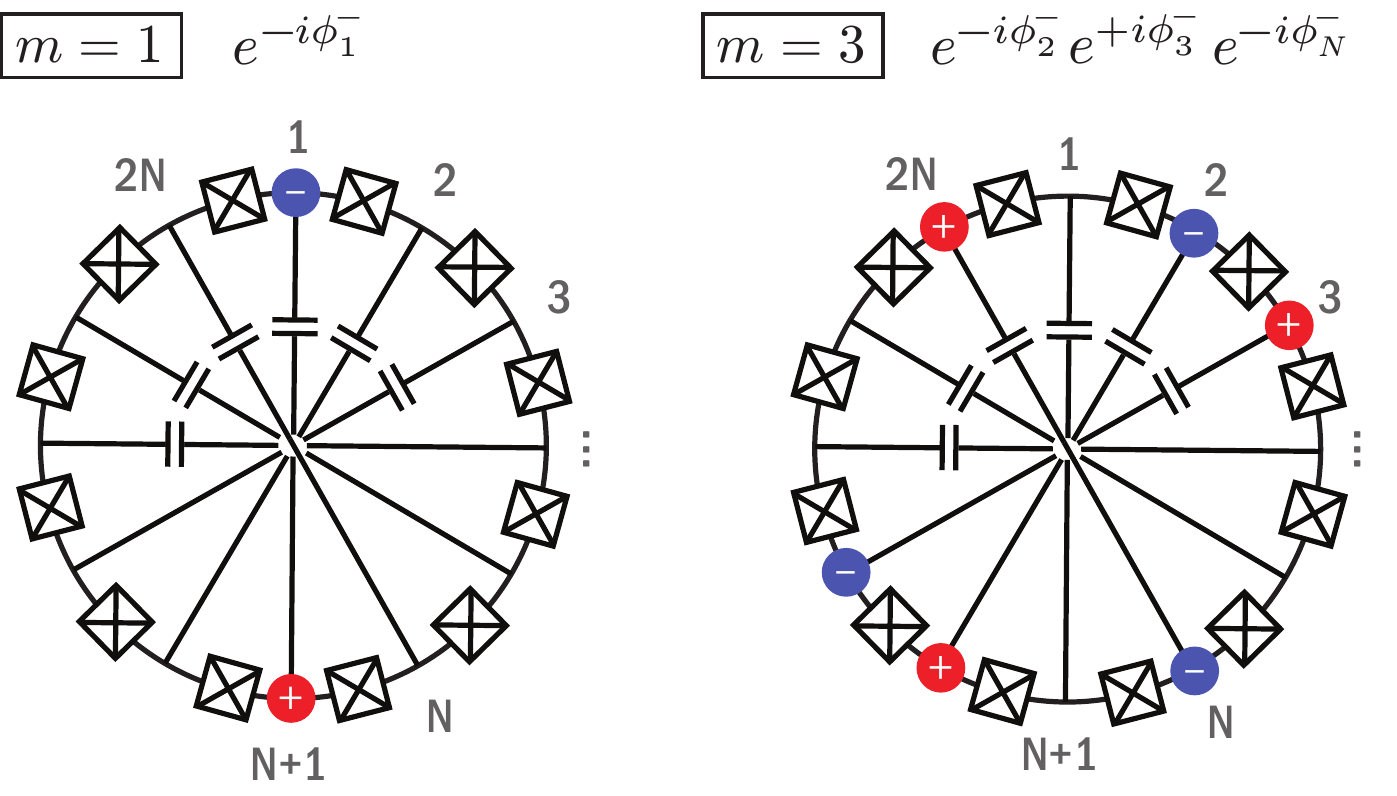}
    \caption{Creation of an odd number of excitons in the current-mirror circuit, here in its equivalent circular representation. Generation of a single exciton ($m=1$) on rung $j=1$ is achieved by the operator $e^{-i\phi_{1}^{-}}$. Likewise, $m=3$ excitons with alternating signs are generated on rungs $j=2,3,N$ by the operator $e^{-i\phi_{2}^{-}}e^{i\phi_{3}^{-}}e^{-i\phi_{N}^{-}}$. The figure shows the configurations obtained when applying these operators to the charge-neutral circuit. }
    \label{excitoncreation}
\end{figure}
This way, we obtain the following expression for the degeneracy-breaking terms:
\begin{align}
\label{KNcirc}
H_{\text{K}} =
-K\cos({\textstyle\frac{\phi_{\text{ext}}}{2}})\sum_{m\leq N}^{\text{odd}}\,\sum_{i_{1}<\cdots<i_{m}}
\cos\bigg[\sum_{j=1}^{m}(-1)^{j}\phi_{i_{j}}^{-}\bigg],
\end{align}
Here, $K$ is the rate at which excitons of the described kind are generated and annihilated. The summation index $m$ of the outer sum counts the number of created/annihilated excitons, and runs over all odd integers in the range of $1$ through $N$. The inner sum accounts for all possible positions $i_{j}$ of the $m$ excitons. Sign alternation of charges is reflected by the corresponding prefactor in the cosine argument. We observe that the exciton-generating terms also re-establish dependence of the spectrum on the external flux. This dependence remains strongly suppressed, however, due to the smallness of the rate $K$.

Setting a rigorous bound on $K$ itself is generally nontrivial for the combination of two reasons. First, the number of contributing perturbative paths is large, namely $\sim N!$; second, the ubiquitous energy denominators intricately depend on the energies of the various high-energy virtual states accessed in the course of the process.
One can obtain the upper bound
\begin{equation}
2^{N-1}K< E_{J}\left(\frac{2E_{J}}{\Delta E}\right)^{N-1},
\end{equation}
where $\Delta E=E_{{\text C}0}^{+}-E_{{\text C}1}^{+}$, and $2^{N-1}$ is the number of  degeneracy-breaking terms in Eq.~\eqref{KNcirc}, see Appendix~\ref{appendix:Nthorder} for details. In the parameter regime studied in this paper, see Tab.~\ref{table:parameters}, the numerator $2E_{J}=38$ GHz is smaller than the denominator $\Delta E$ which increases as a function of $N$, with $\Delta E>60$ GHz when $N\geq 4$. Evidently, creation and annihilation of excitons and, hence, the lifting of the potential-minima degeneracy, is exponentially suppressed with growing circuit size $N$.

The resulting full effective Hamiltonian, capturing both exciton tunneling and degeneracy breaking to leading order, is given by
\begin{align}
\label{MobHam}
&H_{\text {eff}}=\sum_{i,j=1}^{N}4
    \left(\mathsf{E_{C}^{-}}\right)_{i,j}
    \left(n_{i}^{-}-n_{gi}\right)
    \left(n_{j}^{-}-n_{gj}^{-}\right) \\ \nonumber
    &\qquad-\sum_{j=1}^{N-1}J\cos(\phi_{j+1}^{-}-\phi_{j}^{-})-
    J\cos(\phi_{1}^{-}+\phi_{N}^{-}) \\ \nonumber
    &\qquad-K\cos(\textstyle\frac{\phi_{\text{ext}}}{2})\sum_{m\leq N}^{\text{odd}}\sum_{i_{1}<\cdots<i_{m}}
    \cos\bigg[\sum_{j=1}^{m}(-1)^{j}\phi_{i_{j}}^{-}\bigg].
\end{align}
The degeneracy of the two potential minima located at $\{\phi_{j}^{-}=0\}$ and $\{\phi_{j}^{-}=\pi\}$ is now weakly broken by $H_{\text{K}}$. In the parameter regime of interest, $H_{\text{K}}$ is exponentially suppressed, and minima remain  near-degenerate. Additionally, the kinetic energy scale $E_{C_{B}}=0.2$ GHz is small compared the barrier height $J\approx 2$ GHz, leading to very little tunneling between the two minima. We thus obtain localized, nearly degenerate states for the ground and first-excited states, corresponding to the qubit manifold.

\subsection{Linearization of the Effective Model}

Low-energy excitations of the current-mirror circuit arise as harmonic excitations within the two potential wells of the effective model. To make this statement quantitative, we extract the normal-mode frequencies associated with the linearized version of the second-order effective Hamiltonian \eqref{effHam}. (Degeneracy-breaking terms proportional to $K$ can be safely neglected in this context.)

The normal-mode analysis is conveniently performed in the Lagrangian picture, where Taylor expansion of the potential energy about the $\{\phi_{j}^{-}=0\}$ and $\{\phi_{j}^{-}=\pi\}$ minima yields
\begin{align}
\mathcal{L}_{\text {eff}}^{0,\pi}&=\frac{1}{2}
\left(\frac{\Phi_0}{2\pi}\right)^2\sum_{i,j}^{N}
			\dot{\phi}_{i}^{-}(\mathsf{C}_{-})_{i,j}
			\dot{\phi}_{j}^{-} \\ \nonumber
            &\quad+\sum_{j=1}^{N-1}\frac{J}{2}
            (\phi_{j+1}^{-}-\phi_{j}^{-})^2
            +\frac{J}{2}(\phi_{1}^{-}+\phi_{N}^{-})^2.
\end{align}
We seek normal-mode solutions of the form $\vec{\phi}^- = \vec{\xi}_k e^{i\omega_k t}$. As usual, plugging this ansatz into the equation of motion yields the generalized eigenvalue problem $\mathsf{M}\vec{\xi}_k = \omega_k^2 \mathsf{C}_- \vec{\xi}_k$ for the normal-mode amplitudes $\vec{\xi}_k$ and associated eigenfrequencies $\omega_k$. Here, $\mathsf{M}$ denotes the coefficient matrix for the potential bilinear form. Inspection shows that $\mathsf{M}$ and the capacitance matrix $\mathsf{C}_-$ are both real, symmetric, and tridiagonal with additional corner elements. For this reason, they possess the same system of orthonormal eigenvectors, $\mathsf{M}\vec{\xi}_k = m_k \vec{\xi}_k$ and $\mathsf{C}_-\vec{\xi}_k = c_k \vec{\xi}_k$, and the eigenfrequencies are obtained from $\omega_k^2=m_k/c_k$ with the result  
\begin{align}\nonumber
\omega_{k}&=\frac{2\pi}{\Phi_0}
    \sqrt{\frac{4J\sin^2\left[(2k-1)\pi/ 2N\right]}
			{C_{B}+C_{g}/2+2C_{J}\sin^2
			\left[(2k-1)\pi/2N\right]}} \\ 
		&= \frac{2\pi}{\Phi_0}\sqrt{\frac{4J}{C_{B}}}
    \left|\sin \frac{(2k-1)\pi}{2N} \right|
    +\mathcal{O}\left(\textstyle \frac{C_{g}}{C_{B}},
    \frac{C_{J}}{C_{B}}\right).
\label{modefreqs}
\end{align}
Normal-mode frequencies are generally four-fold degenerate due to two factors: first, the $0$ and $\pi$ minima contribute equal sets of normal modes; second, one finds $\omega_{k}=\omega_{N-k+1}$ for all $k\le \lfloor N/2 \rfloor $. (In the case of odd $N$, the highest normal-mode frequency $\omega_{(N+1)/2}$ is only two-fold degenerate, while the rest remain four-fold degenerate.)

A key insight from Eq.\ \eqref{modefreqs} is that the eigenfrequencies of the lowest-lying modes scale with $1/N$. Therefore, the circuit size of the current-mirror qubit should not be chosen too large, in order to avoid unwanted thermal population of low-lying excited states. We will argue in Sec.\ \ref{sec:coherence}  that there is indeed a trade-off between depolarization and dephasing times which have opposite behavior as a function of circuit size $N$.
\begin{figure*}
    \centering
    \includegraphics[width=0.78\textwidth]{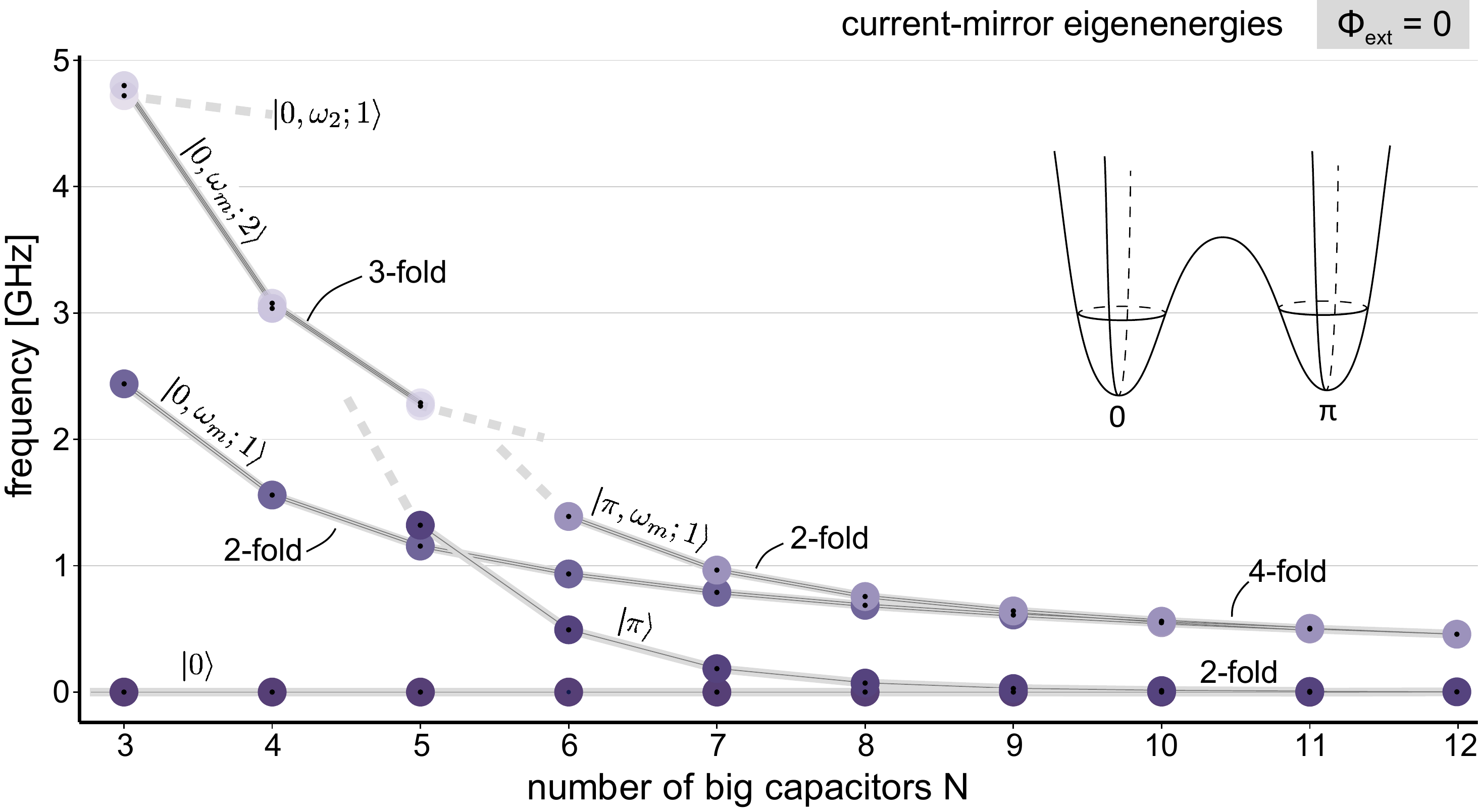}
    \caption{\label{Heffspectra} DMRG results: six lowest eigenenergies of the current-mirror circuit as a function of circuit size $N$ (number of big capacitors), based on the effective Hamiltonian [Eq.\ \eqref{MobHam}]. For $N\ge7$, ground state and first excited state, $|0\rangle$ and $|\pi\rangle$, are the nearly degenerate, lowest eigenstates localized in the $0$ and $\pi$ wells. Other eigenstates correspond to harmonic excitations within the two wells, and are denoted  $\ket{\alpha,\omega_{k};n}$, where $\alpha=0,\pi$ labels the well, $k$ is the mode index [see Eq.~\eqref{modefreqs}], and $n$ the number of harmonic excitations. $m$ is placeholder for the degenerate modes $\omega_0$ and $\omega_1$. The inset shows a schematic 2d projection of the potential energy with minima at $0$ and $\pi$. (Circuit parameters used: see Tab.~\ref{table:parameters}.)
    }
\end{figure*}
\subsection{Nature of Ground-State Degeneracy in the Full Model}
We have successfully confirmed the near-degeneracy of the lowest two eigenstates within the effective model. It is instructive to revisit the origin of this degeneracy in the context of the full circuit model [Eq.~\eqref{Ham}]. We expect low-lying eigenstates to be related to the minima of the potential energy $U=-\sum_{j=1}^{2N}E_{J}\cos(\phi_{j+1}-\phi_{j}-\phi_{\text{ext}}/2N)$. Finding its local minima via $\nabla U=0$ leads to a set of $2N$ modular equations,
\begin{align}\nonumber
\phi_{1}-\phi_{2N}&\equiv\phi_{2}-\phi_{1}, \\ \label{phasediff}
\phi_{j+1}-\phi_{j}&\equiv\phi_j-\phi_{j-1} 
    &2\le j \le 2N-1, \\ \nonumber
\phi_{2N}-\phi_{2N-1}&\equiv\phi_{1}-\phi_{2N},
\end{align}
where all congruences are modulo $2\pi$. 
At the minima locations, phase differences between adjacent nodes are thus identical to some constant value $\Delta\phi$. It is convenient to make use of gauge freedom and set $\phi_{1}=0$, to obtain $\phi_{j}=(j-1)\Delta \phi$, where $1\leq j\leq 2N-1$. Solving for $\phi_{2N}$ yields $\Delta\phi=\pi l/N$ with integer $l$ denoting the number of phase windings or vortices. In summary, the local potential minima are labeled by the integer vortex number $l$, and have coordinates
\begin{equation}
\label{phimin1}
(\vec{\phi}_l)_j =\frac{\pi l}{N}(j-1), \qquad |l| < \frac{N}{2}+\frac{\phi_{\text{ext}}}{2\pi}.
\end{equation}
As opposed to the effective-Hamiltonian potential with only two minima at $0$ and $\pi$, one faces a multitude of local minima in the full-circuit model. There is a simple correspondence between minima in the two models, namely
\begin{align}
    \vec{\phi} &= \vec{\phi}_{l} \quad (l=\substack{\text{even}\\ \text{odd}}) \qquad \leftrightarrow \qquad \{\phi_j^-=\substack{0\\\pi}\},
\end{align}
i.e., minima with even (odd) vortex parity contribute to the effective-model minimum at $0$ ($\pi$). In particular, the ground-state wave function occupies even-$l$ minima, $l=0,\pm 2,\ldots$, while the nearly degenerate first excited state occupies odd-$l$ minima, $l=\pm1,\pm3,\ldots$ 

Whether low-lying wave functions span multiple minima depends on whether tunneling between these minima is significant. We thus examine the eigenvalues and eigenvectors of the effective-mass tensor governing the tunneling dynamics, here given by the capacitance matrix $\mathsf{C}$ [Eq.\ \eqref{CapMatrix}]. 
The eigenvectors of $\mathsf{C}$ are the real and imaginary parts of 
\begin{equation}
(\vec{\zeta}_{k})_j = e^{i\pi j k/N}/\sqrt{2N}
\end{equation} 
with corresponding eigenvalues 
\begin{align}
\gamma_k = C_{g}+4C_{J}\sin^2 \left(\frac{\pi k}{2N}\right)+[1-(-1)^k]C_{B}.
\end{align}
Here, $k$ ranges from $1$ to $2N$, and eigenvalues are generally two-fold degenerate (with the exceptions of non-degenerate $k=N$ and $k=2N$). 

From the eigenvalues we infer that the effective masses for even $k$ involve only the small capacitances $C_{J}$ and $C_{g}$ (light effective mass), while odd-$k$ eigenvalues involve the large capacitance $C_{B}$ (heavy effective mass). At the same time, it is only along the directions of odd-$k$ eigenvectors that the values of $\phi_j^-$ variables change, since $(\vec{\zeta}_{k})_{j}=-(\vec{\zeta}_{k})_{j+N}$ for odd $k$. Hence, tunneling between minima of different vortex parity is strongly suppressed, confirming the picture of low-lying eigenstates having definite vortex parity.

\section{Spectrum of the Current Mirror}
\label{sec:spectrum}

\begin{table}
\centering
\begin{tabularx}{\columnwidth}{*4{>{\centering\arraybackslash}X}}
\hline\hline 
 $E_{C_{B}}$ &$E_{C_{J}}$ &$E_{C_{g}}$ & $E_{J}$ \\
 \hline
    0.2 $h\cdot$GHz  & 100 $h\cdot$GHz  &200 $h\cdot$GHz & 19   $h\cdot$GHz  \\ \hline\hline
\end{tabularx}
\caption{Circuit parameters used in the numerical analysis, consistent with the conditions \eqref{protected} for the protected regime.}
\label{table:parameters}
\end{table}

The simplification achieved with the effective model makes the problem of finding the current-mirror spectrum amenable to numerical diagonalization. The selected current-mirror parameters are given in Tab.~\ref{table:parameters}. This choice firmly places the system in the intended protected regime where the effective model is valid. We note that from the experimental perspective, achieving the small junction capacitance necessary to realize a device with these parameters will likely be the biggest challenge. We compute the eigenspectrum of the effective model as a function of circuit size (number of big capacitors $N$) as well as external flux $\Phi_{\text{ext}}$, using two separate techniques: exact diagonalization and Density Matrix Renormalization Group (DMRG).

We first perform exact diagonalization in the exciton-number basis. We employ the simplest truncation scheme by choosing an appropriate exciton-number cutoff $n_c$ for each rung, thus retaining exciton-basis states $-n_{c}<n_{j}^{-}<n_{c}$. (We find that $n_{c}=10$ is sufficient for convergence in our case.) This way, exact diagonalization is feasible up to circuit size $N=6$, beyond which memory requirements become excessively large \footnote{A truncation scheme using a global exciton-number cutoff will likely succeed in pushing exact diagonalization to slightly larger system sizes, but was not our main interest here.}.

To extend our numerical treatment to current-mirror circuits of sizes $N>6$, we make use of DMRG methods. These methods have been employed very successfully in simulating one-dimensional quantum systems by an efficient Hilbert-space truncation that only retains largest-weight eigenvectors of the system density matrix at each step in the algorithm \cite{Schollwock2011,White1992,*White1993}. A large class of many-body and spin systems intractable with exact diagonalization can be handled with DMRG. In the context of superconducting circuits, DMRG has previously facilitated the study of capacitively-coupled Josephson junction necklaces  by Lee et al.\ \cite{Lee2003}.
Since memory efficiency and fast convergence of DMRG algorithms generally rely upon the short-ranged nature of interactions \cite{Schollwock2011}, Lee and coworkers eliminated  long-ranged capacitive interactions by neglecting junction capacitances. They further applied open boundary conditions  which are known to speed up convergence  \cite{Lee2003, Schollwock2011, White1992, *White1993}.

We proceed in a similar way for the current-mirror circuit, noting that capacitive interactions between rung degrees of freedom are relatively short-ranged in the effective model, see Eq.\ \eqref{ECminus}. The long-range interactions produced by the degeneracy-breaking term $H_{K}$ are known to be weak and do not impede the treatment. The effective Hamiltonian $H_\text{eff}$ \eqref{effHam} thus essentially describes a one-dimensional exciton model with nearest-neighbor hopping suitable for the DMRG algorithm. The use of open boundary conditions, however, is not appropriate in our case, as the M\"obius topology of the current-mirror circuit is crucial for the ground-state degeneracy. We have implemented DMRG for the effective-model Hamiltonian  using the ITensor package developed by Stoudenmire and White \cite{Stoudenmire}.  In the following, we present DMRG results for circuit sizes up to $N=12$.  To assess the accuracy of DMRG spectra, we have compared against effective-model spectra obtained with exact diagonalization up to 6 big capacitors. We find excellent agreement, with relative deviations less than $2\times10^{-6}$. (We have confirmed that even larger circuit sizes can be tackled with DMRG, and the computational bottleneck is no longer memory but runtime.)

The current mirror has been predicted to exhibit ground-state degeneracy in the limit of large $N$ \cite{Kitaev2006}. For experimental realizations, it is a pertinent question what concrete circuit size is required to enter the protected regime. Results from our DMRG calculations shed new light on this issue.  Figure \ref{Heffspectra} shows the lowest six effective-model eigenenergies versus circuit size $N$, for vanishing magnetic flux. We observe that ground and first excited states rapidly approach each other above $N=6$, consistent with the exponential suppression of the degeneracy-breaking terms. Labeling of qubit states in Fig.\ \ref{Heffspectra} is guided by the linearized effective model: the two qubit states eventually becoming degenerate are denoted by $\ket{0}$ and $\ket{\pi}$ in reference to the corresponding potential minima. Other low-lying eigenstates can be identified as harmonic excitations in the two wells with excitation energies approximated by multiples of the mode frequencies from Eq.~\eqref{modefreqs}. A state in the $\alpha$ well ($0$ or $\pi$) with $n$ excitations in mode $k$ is written as $|\alpha,\omega_k;n\rangle$.

For $N=3$, the $\pi$ well minimum is still significantly above the $0$ well minimum, and the latter hosts three eigenmodes with frequencies $\omega_{0}=\omega_{1}<\omega_{2}$. Coincidentally, $\omega_{2}/\omega_{0}\approx2$, leading to the apparent near-degeneracy for the highest energy eigenstates shown at $N=3$. Only the lowest mode frequencies scale with $1/N$, leading to the disappearance of states involving $\omega_{2}$ from the low-energy spectrum above $N=3$. At large $N$, the degeneracy-breaking terms in Eq.~\eqref{MobHam} become exponentially small and the low-energy eigenstates are well described as harmonic excitations in one of the two wells, where each well is itself a 2-dimensional harmonic oscillator with two-fold degenerate mode frequencies.

We have further investigated the case of half-integer flux, $\phi_\text{ext}=\pi$, where degeneracy-breaking terms in Eq.~\eqref{effHam} vanish. Consistent with that, the ground state is nearly degenerate already at $N=3$. However, operation at this point produces a completely symmetric double-well potential, thus leading to eigenstates that are symmetric and anti-symmetric superpositions of the localized wave functions in each well. These qubit states are not protected from relaxation by disjoint-support arguments, defeating one of the original purposes for considering the superconducting current mirror as a qubit.

Our numerical results have quantified the circuit size required in order to enter the protected parameter regime. For the parameters we have chosen, the qubit states become the lowest-energy eigenstates of the circuit only for $N\geq6$. Below this, the degeneracy-breaking terms are large enough to push the qubit state in the $\pi$ well above the low-energy excitations in the $0$ well.

\section{Coherence Properties}
\label{sec:coherence}

We follow the standard formalism \cite{Shnirman2002, Martinis2003, Schoelkopf2003, Ithier2005, Koch2007, Groszkowski2018} to compute coherence properties of the current-mirror qubit with respect to relevant noise channels. In the usual context of Bloch-Redfield theory \cite{Wangsness1953,Geva1995}, decoherence rates for a given noise channel $\lambda$ are quantified in terms of the pure-dephasing time $T_{\phi}^{\lambda}$, the depolarization time $T_{1}^{\lambda}$, and the dephasing time $T_{2}^{\lambda}$ given by $1/T_{2}^{\lambda}=1/2T_{1}^{\lambda}+1/T_{\phi}^{\lambda}$. The analysis proceeds by a perturbative treatment of noise terms $\delta \lambda(t)$ in the Hamiltonian, leading to the Taylor-series expansion of the effective Hamiltonian,
\begin{equation}
\begin{split}
\label{noiseexpand}
H_{\text {eff}}[\lambda(t)] = H_{\text {eff}}(\lambda_0)+
(\partial_\lambda H_{\text{eff}})
\delta\lambda(t)+ 
\tfrac{1}{2}(\partial_\lambda^{2}H_{\text {eff}})
\delta\lambda^{2}(t),
\end{split}
\end{equation}
where $\lambda(t)=\lambda_{0}+\delta\lambda(t)$, and derivatives are evaluated at $\lambda=\lambda_0$ both here and in the following.

\subsection{Pure Dephasing}
In the context of superconducting qubits, three main noise channels $\lambda$ are likely candidates to dominate pure-dephasing rates: charge noise via offset-charge variations on each node, critical-current noise of each junction, and flux noise in the magnetic flux penetrating the M\"obius ring. Analytical expressions for the pure dephasing times due to these $1/f$ noise channels are given, for example, in Refs.\  \cite{Ithier2005,Groszkowski2018}:
\begin{align}
\label{Tphi}
T_{\phi}^{\lambda}=&\bigg[2A_{\lambda}^2
    \left(D_{0\pi}^{\lambda,(1)}\right)^2
    |\ln{\omega_{ir}t}|\\ \nonumber 
    &+2A_{\lambda}^4
    \left(D_{0\pi}^{\lambda,(2)}\right)^2
    \left(\textstyle \ln^2{\frac{\omega_{uv}}{\omega_{ir}}}+
    2\ln^2{\omega_{ir}}\right)
    \bigg]^{-1/2},
\end{align}
where
\begin{align}
\label{eq:Dmatelem}
D_{0\pi}^{\lambda,(n)}=& \textstyle
    \mel**{0}{\partial^n_\lambda H_{\text {eff}}}{0}-
    \mel**{\pi}{\partial^n_\lambda H_{\text {eff}}}
    {\pi}.
\end{align}
$\omega_{\text{ir}}$ and $\omega_{\text{uv}}$ denote the low- and high-frequency cutoffs of the noise power spectrum, respectively, $t$ represents the timescale of a typical measurement, and $A_{\lambda}$ is the $1/f$ amplitude of the noise power spectrum at positive frequencies \cite{Wellstood1987, Zorin1996, VanHarlingen2004, Yoshihara2006, Pourkabirian2014,Kumar2016, Hutchings2017}
\begin{equation}
\label{S1f}
S_{\lambda}^{1/f}(\omega)=\frac{2 \pi
A_{\lambda}^2}{\omega^{\gamma}},\quad
{\text{ for }} \omega_{ir}<\omega<\omega_{uv},
\end{equation}
where $\gamma\approx1$. We make the common assumption that different noise channels are statistically independent \cite{Ithier2005, Shnirman2002, Yan2016, Martinis2003, Manucharyan2009, Clerk2010}, and use the parameter values $\omega_{\text{uv}}/2\pi= 3.0\, {\text{GHz}}, \omega_{\text{ir}}/2\pi = 1\, {\text{Hz}}$ and $t=10\,\mu$s \cite{Yan2016,Hutchings2017,Quintana2017}. Derivatives of the Hamiltonian appearing in Eq.~\eqref{eq:Dmatelem} are calculated numerically using a five-point stencil, as analytical evaluation of derivatives is prevented by the difficulty of obtaining a closed, analytical form for the degeneracy-breaking amplitude $K$. For charge noise and critical-current noise, the derivatives are performed by varying the noisy quantity at a single site/junction, keeping all other offset charges/junction critical currents constant.
\begin{figure*}
    \centering
    \includegraphics[width=1.0\textwidth]{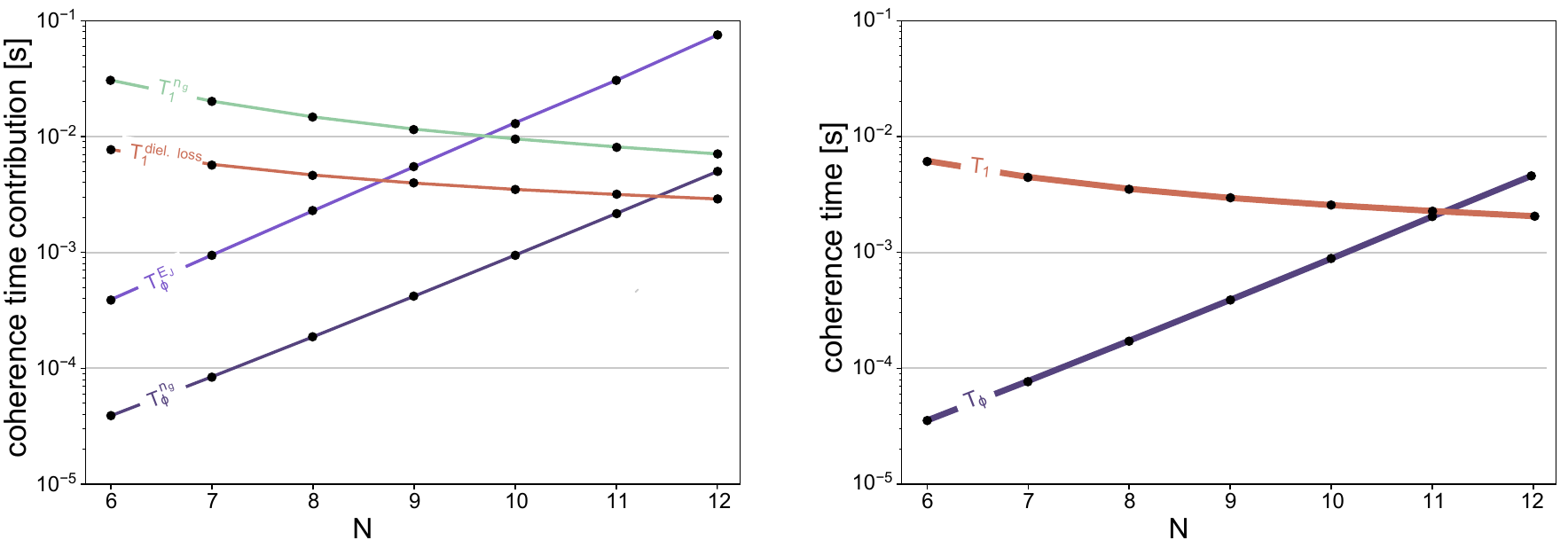}
    \caption{(a) Pure dephasing and depolarization times $T_{\phi}^{\lambda}$ and $T_1^{\lambda}$, as a function of circuit size $N$. Charge noise limits $T_{\phi}$, but dephasing times increase exponentially with $N$. Dielectric loss limits $T_1$, with escape from the qubit subspace dominating over relaxation. Due to the $1/N$ scaling of low-energy excitations thermal excitations become more prominent with increasing circuit size. This explains the observed decrease in $T_{1}^{\lambda}$ as a function of $N$. (b) Total $T_{1}$ and $T_{\phi}$ as a function of circuit size $N$.}
    \label{fig:Tfig}
\end{figure*}
Figure \ref{fig:Tfig} presents our numerical results for the pure-dephasing times of the current-mirror qubit.  We find that charge noise is the limiting factor for pure dephasing, which is not unexpected given that the current mirror operates at $E_J < E_{C_J}$ -- a regime where offset-charge dependence is relevant. Critical-current noise is subdominant, and dephasing due to flux noise is so insignificant that its contributions are outside the range displayed in Fig.\ \ref{fig:Tfig}. Dephasing times overall improve as a function of $N$ because the relevant derivatives $\partial K/\partial n_{gj}$ and $\partial K/\partial E_{J}$ are suppressed as a function of $N$, as a direct consequence of the exponential suppression of $K$ with $N$. (In fact, operating the qubit at $\phi_{\text{ext}}=\pi$ improves offset-charge and critical-current pure-dephasing times by factors of ten or more. However, this operating point is not attractive since protection from relaxation is lost.)

We predict dephasing times on the order of milliseconds for $N$ as small as 10, which will represent a ten-fold increase from the current state-of-the-art \cite{Kjaergaard2019}. We expect the exponential decrease of $K$ to persist for higher values of $N$, indicating that pure dephasing times on the order of tens of milliseconds should be possible for $N=13,14$.

\subsection{Depolarization}
An additional important merit of the current-mirror qubit is its built-in protection from relaxation. Because of the virtually disjoint support of the  $\ket{0}$ and $\ket{\pi}$ wave functions, all matrix elements $\mel{0}{M}{\pi}$ with respect to local operators $M$ are exponentially small. This implies that depolarization in the form of escape upwards to higher-energy eigenstates outside of the qubit subspace is the main contributor to $T_{1}$, as opposed to relaxation within the computational subspace. We will study the same three noise channels considered in the previous subsection, in addition to dielectric loss, which is a known contributor to relaxation \cite{Smith2019,Wang2015}. The total depolarization time $T_{1}^{\lambda}$ due to channel $\lambda$ is obtained by summing over individual depolarization rates, 
\begin{align}
\label{T1}
1/T_{1}^{\lambda}&=1/T_{1}^{\lambda,\pi\rightarrow0}+1/T_{1}^{\lambda,0\rightarrow\pi} \\ \nonumber
&\quad +\sum_{m\ge 2}\left[1/T_{1}^{\lambda,0\rightarrow m} + 1/T_{1}^{\lambda,\pi\rightarrow m}\right],
\end{align}
dominated by transitions to eigenstates outside of the qubit manifold. We find the individual depolarization times via Fermi's golden rule \cite{Ithier2005,Clerk2010,Schoelkopf2003},
\begin{equation}
\label{equation:depolarization}
1/T_1^{\lambda,m\rightarrow n}=\textstyle\frac{1}{\hbar^2}|\langle m |M| n\rangle |^2 S_{\lambda}(\omega_{mn}),
\end{equation}
where $M$ is an operator that induces depolarization and $\omega_{mn}$ is the energy splitting between eigenstates $m$ and $n$ divided by $\hbar$. For noise described via Eq.\ \eqref{noiseexpand}, the operator takes the form $M=\partial_{\lambda}H_{\text {eff}}$. 

Following Refs.~\cite{Smith2019,Wang2015}, we model dielectric loss as dissipation in the dielectric of each capacitor in the current mirror. The operator $M$ involved in depolarization due to dielectric loss is the charge stored on each capacitor \cite{Smith2019}. To find expressions for the charges on the big capacitors and the junction capacitors, we return to the Lagrangian picture. The charge across the $j$-th junction is $\frac{\Phi_{0}}{2\pi}C_{J}(\dot{\phi}_{j+1}-\dot{\phi}_{j})$. Similarly, the charge across the $j$-th big capacitor is given by $\frac{\Phi_{0}}{2\pi}C_{B}(\dot{\phi}_{j+N}-\dot{\phi}_{j})$. In order to evaluate matrix elements, these expressions must be rewritten in terms of operators associated with the effective Hamiltonian. The relation
\begin{equation}
\dot{\phi}_{i}^{-}=\sum_{j=1}^{N}
    8(\mathsf{E_{C}^{-}})_{i,j}n_{j}^{-}/\hbar
    =8E_{C_B}n_{i}^{-}/\hbar +
    \mathcal{O}\left(\textstyle\frac{C_{J}}{C_{B}}\right)
\end{equation}
allows us to recast the capacitor charges in terms of exciton charge operators. The matrix elements in Eq.~\eqref{equation:depolarization} associated with dielectric loss can now be evaluated numerically within the framework of the effective Hamiltonian. The final ingredient for predicting depolarization times due to dielectric loss is the form of the noise power spectrum for a capacitor with capacitance $C$. This is given by \cite{Smith2019,Pop2014}
\begin{equation}
\label{Sdielloss}
S_{\text{diel.}}(\omega,C)+S_{\text{diel.}}
(-\omega,C)=\frac{2\hbar}{C Q(C)}\coth{\frac{\hbar\omega}{2k_{\text{B}}T}},
\end{equation}
where $T$ is the temperature and $Q(C)$ is the quality factor of the dielectric. We use values of $Q(C_{J})=10^6$ and $Q(C_{B})=10^7$ as well as $T=15$ mK \cite{Smith2019,Pop2014}.

While the symmetrized noise power spectrum is useful for discussing relaxation, we are interested mainly in escape from the qubit subspace. These processes only involve the noise power spectrum at negative frequencies, corresponding to absorption of energy from the environment. Assuming the microscopic origin of the noise is a system in thermal equilibrium, the spectrum must obey detailed balance, i.e.,
\begin{align}
\label{Sexpsuppr}
S(-\omega)=S(\omega)e^{-\hbar\omega/(k_{B}T)}.
\end{align}
allowing us to solve for $S_{\text{diel.}}(-\omega,C)$ using Eq.~\eqref{Sdielloss}. Equation \eqref{Sexpsuppr} also implies an exponential suppression of the $1/f$ noise power spectrum [Eq.~\eqref{S1f}] at negative frequencies relative to positive frequencies. Since low-energy excitations in the current mirror scale as $1/N$, we expect  $T_{1}$ to decrease as a function of $N$. Since $T_{\phi}$ was observed to increase with $N$, we expect there to be an optimal $N$ for operating the current-mirror qubit where $T_1$ and $T_\phi$ are of the same order of magnitude. 

We present our results for the depolarization times of the qubit in Fig.~\ref{fig:Tfig}. Dielectric loss is the limiting factor for depolarization at all $N$, which is reasonable given that a circuit of size $N$ by definition has $2N$ junction capacitors and $N$ big capacitors. $1/f$ charge noise is sub-dominant, and contributions from $1/f$ critical-current noise and magnetic-flux noise to depolarization are safely negligible. Our calculations yield depolarization times of multiple milliseconds, representing a ten- or hundred-fold increase over current state-of-the-art transmons, and on-par with current state-of-the-art fluxonium qubits \cite{Kjaergaard2019}. As seen in Fig.~\ref{fig:Tfig}, past $N=11$ the qubit ceases to be $T_{\phi}$ limited and becomes $T_{1}$ limited.

We emphasize that escape from the qubit subspace is the dominant contributor to depolarization, and relaxation within the qubit subspace is vastly suppressed. Such escape processes are only relevant for transitions inside of each well, because of a similar suppression of matrix elements between states in different wells. Interestingly, if the qubit degree of freedom could be made insensitive to harmonic excitations and merely be linked to overall occupation in the $0$ vs.\ the $\pi$ well, then $T_1$ times would be dramatically longer than in our above estimates. 

\section{Conclusion}
\label{concsec}

Kitaev's current-mirror circuit is an attractive qubit concept for intrinsic protection from noise and corresponding long  coherence times, rendering it an interesting design for the next generation of superconducting qubits. Detailed analysis of the current-mirror circuit faces new challenges not common for previously studied superconducting circuits, in particular the significant increase in the number of degrees of freedom. As a consequence, simulating the full Hamiltonian using exact diagonalization has proven difficult, if not intractable for $N>3$ because of memory requirements. To overcome this obstacle, we have presented an effective Hamiltonian describing the low-energy subspace of the current-mirror circuit, halving the number of degrees of freedom and reducing the range of interactions. This effective Hamiltonian was derived by treating the Josephson tunneling perturbatively, resulting in second-order exciton tunneling as well as $N$-th order degeneracy-breaking processes. We have provided a systematic discussion of the degeneracy-breaking terms, crucial to predicting the behavior of the current-mirror qubit. The effective Hamiltonian thus obtained is amenable to DMRG treatment, and has allowed us to simulate circuits with up to $N=12$ big capacitors. For the DMRG numerics, computation time rather than memory poses the bottleneck, and therefore simulation of even larger circuits is possible.

Our numerical DMRG results confirm the development of (approximate) ground-state degeneracy for circuit sizes exceeding $N=6$, tracing the origin of the near-degeneracy to an effective double-well structure with slight asymmetry in the $N$-dimensional configuration space of the effective model. Linearization of the potential around both minima further yields a good approximation for  the circuit's low-energy excitations in terms of harmonic normal modes. An important insight from this analysis is the observed $1/N$ scaling of the energies of low-lying excitations. As a consequence, excessively large circuit sizes $N$ must be avoided as increasing size will eventually lead to depolarization from thermal excitations of the harmonic modes.

Based on the spectral data from DMRG, we have estimated coherence times for the current-mirror qubit for a representative set of parameters. $1/f$ charge noise, critical-current noise, flux noise and dielectric loss were investigated for their contributions to both pure dephasing and depolarization. We find that charge noise is likely to limit $T_{\phi}$, while dielectric loss limits $T_{1}$ in our analysis. $T_{\phi}$ is observed to improve as a function of $N$ because of the decreasing degeneracy-breaking coefficient $K$, while $T_{1}$ worsens as a function of $N$ because of the energy suppression of low-lying eigenstates, and resulting thermal excitations. $T_{\phi}$ and $T_{1}$ nearly intersect at $N=11$, indicating that $N=11$ may be considered an ideal operating point of the qubit for the studied parameter set. Coherence times calculations were performed at the charge sweet spot, leaving open for now the characterization in the presence of offset-charge jumps $>0.1e$ \cite{Christensen2019}. 

Future research will be devoted to modes of control and readout of the current-mirror circuit, as well as the study of quasiparticle tunneling, a mechanism for relaxation and dephasing known to affect superconducting qubits \cite{Lutchyn2005,Serniak2018,Pop2014,Catelani2011,Catelani2011a}. The analysis of the latter is complicated by the fact that  the effective model breaks down at points of maximal charge frustration, and would require a simulation of the full model. However, the presence of long-range interactions mediated by the agiton charging energies challenges DMRG convergence, and we defer a discussion of full-model DMRG results to a future publication.

\section*{Acknowledgements}
The authors thank K.\ R.\ Colladay, A.\ Di Paolo, R.\ J.\ Epstein, Z.\ Huang, W.\ C. Smith, M.\ E.\ Weippert, and X.\ You for valuable discussions. D.K.W.\ was supported in part by an ARO QuaCGR fellowship. This research was supported by the U.S.\ Army Research Office under contract number W911NF-17-C-0024. 

\appendix

\section{Analytical Inverse of the Capacitance Matrix}
\label{appendix:ChargeEn}

To obtain analytical expressions for the inverse capacitance matrix of the current-mirror circuit, we bring $\mathsf{C}$ [Eq.\ \eqref{CapMatrix}] into block-diagonal form, where each of the two blocks is a symmetric, tridiagonal, Toeplitz matrix. Once in this form, we employ results from References \onlinecite{Kershaw1969} and \onlinecite{Meek1980} which also apply to matrices with anomalous corner elements. 

In terms of the $\pm$ coordinates [Eq.\ \eqref{phipm}], the capacitance matrix takes on the form
\begin{equation}
\widetilde{\mathsf{C}}=\frac{1}{2}
 \begin{tikzpicture}[
    baseline={([yshift=-.8ex]current bounding box.center)},
    every left delimiter/.style={xshift=.75em},
    every right delimiter/.style={xshift=-.75em}
  ]
    \matrix[
      matrix of math nodes,
      left delimiter=(,
      right delimiter=),
      nodes in empty cells,
    ] (m) {
	C_{+} & 0 & -C_{J} &   & -C_{J} &  \\
    0 & C_{-} &   &  &  & C_{J}  \\
    -C_{J} &  &   &  &  &        \\
     &  &  &  &  & -C_{J} \\
    -C_{J} &   &    &  & C_{+} & 0 \\
      & C_{J} &  &  -C_{J} & 0 & C_{-}\\
    };
    \draw[loosely dotted, thick] (m-1-2) -- (m-5-6);
    \draw[loosely dotted, thick] (m-1-3) -- (m-4-6);
    \draw[loosely dotted, thick] (m-2-1) -- (m-6-5);
    \draw[loosely dotted, thick] (m-3-1) -- (m-6-4);
    \draw (m-2-2) -- (m-5-5);
  \end{tikzpicture},
\end{equation}
where the ordering of basis vectors is $(1;+),(1;-),\cdots,\,(N;+),(N;-)$. Diagonal dots have the usual meaning, and ``$\diagmatdown$" implies that the diagonally preceding pattern is to be repeated. Further, we have defined $C_{+}=C_{g}+2C_{J}$, $C_{-}=C_{g}+2C_{J}+2C_{B}$. Reordering according to $(1;+),\,(2;+),\cdots,\,(N;+),\,(1;-),\,(2;-),\cdots,\,(N;-)$, one achieves
the block-diagonal form
\begin{equation}
\widetilde{\mathsf{C}}=\left(\begin{matrix}
						\mathsf{C}_{+} & 0 \\
						0 & \mathsf{C}_{-}
					\end{matrix}\right)
\end{equation}
with
\begin{equation}
\mathsf{C}_\pm=\frac{C_{J}}{2}
 \begin{tikzpicture}[
    baseline={([yshift=-.8ex]current bounding box.center)},
    every left delimiter/.style={xshift=.75em},
    every right delimiter/.style={xshift=-.75em},
  ]
    \matrix[
      matrix of math nodes,
      left delimiter=(,
      right delimiter=),
      nodes in empty cells,
    ] (m) {
	x_\pm & -1 &   & \mp 1 \\
 	-1 &  &  &   \\
    &  &  & -1 \\
    \mp 1 &   & -1 & x_\pm\\
    };
    \draw[loosely dotted, thick] (m-1-1) -- (m-4-4);
    \draw[loosely dotted, thick] (m-1-2) -- (m-3-4);
    \draw[loosely dotted, thick] (m-2-1) -- (m-4-3);
  \end{tikzpicture},
\end{equation}
and $x_\pm=C_\pm/C_{J}$. The matrix $\mathsf{C}_{+}$ is cyclic tridiagonal, and is readily inverted \cite{Kershaw1969}, while the anomalous corner elements in $\mathsf{C}_{-}$ require additional work \cite{Meek1980}. One finds
\begin{equation}
(\mathsf{C}_{\pm}^{-1})_{j,k}=\frac{\mp U_{k-j-1}(x_{\pm}/2) -U_{N-k+j-1}(x_{\pm}/2)
}{C_{J}[1-T_{N}(x_{\pm}/2)]},
\end{equation}
where $T_n$, $U_n$ denote the Chebyshev polynomials of the first and second kind.

Based on these results, we determine asymptotic expressions for the charging-energy matrix elements  
\begin{equation}
\label{ecpm}
\left(\mathsf{E_{C}^{\pm}}\right)_{j,k}=\frac{e^2}{2}(\mathsf{C}_{\pm}^{-1})_{j,k},
\end{equation}
associated with the agiton and exciton charges $n_{j}^{\pm}$. In the parameter regime of interest, capacitances follow the hierarchy  $C_{B}\gg C_{J}>C_{g}$. Consequently, matrix elements of $\mathsf{E_{C}^{+}}\sim 1/C_g$ tend to be large compared to relevant elements of  $\mathsf{E_{C}^{-}}\sim 1/C_B$. Agiton charging energies are long-ranged, with maximum entries along the diagonal and monotonically decreasing towards minimum entries along the $N/2$-th off-diagonal. Asymptotic expressions for the diagonal, and the first and $N/2$-th off-diagonal are given by:
\begin{align}\label{ec+1}
E_{C0}^{+}\equiv&\left(\mathsf{E_{C}^{+}}\right)_{j,j} \\ \nonumber =&\frac{e^2}{NC_{g}}\left[1+\frac{C_{g}(N^2-1)}{12C_{J}}+  \mathcal{O}\left(\textstyle\left\{\frac{C_{g}}{C_{J}}\right\}^2\right) \right] , \\\label{ec+2}
E_{C1}^{+}\equiv&\left(\mathsf{E_{C}^{+}}\right)_{j,j\pm1} \\ \nonumber =&\frac{e^2}{NC_{g}}\left[1+\frac{C_{g}(N^2-6N+5)}{12C_{J}} + \mathcal{O}\left(\textstyle\left\{\frac{C_{g}}{C_{J}}\right\}^2\right)\right],
\end{align}
and
\begin{align}\label{ec+3}
E_{C\frac{N}{2}}^{+}&\equiv(\mathsf{E_{C}^{+}})_{j,j\pm\frac{N}{2}} \\ \nonumber
            &=\frac{e^2}{NC_{g}}\left[1-\frac{C_{g}}{24C_{J}}\left(N^2+2 \right) 
             + \mathcal{O}\left(\left\{\textstyle\frac{C_{g}}{C_{J}}\right\}^2\right)   \right].
\end{align}
For excitons, charging energies are much shorter-ranged with off-diagonal elements decreasing rapidly in powers of $C_{J}/C_{B}$. The primarily relevant entries of $\mathsf{E_{C}^{-}}$ are along the diagonal and first off-diagonal,
\begin{align}
\label{ec-1}
E_{C0}^{-}\equiv& \left(\mathsf{E_{C}^{-}}\right)_{j,j} =\frac{e^2}{2C_{B}}+\mathcal{O}\left(\textstyle \frac{C_{J}}{C_{B}}, \frac{C_{g}}{C_{B}}\right), \\
E_{C1}^{-}\equiv& \left(\mathsf{E_{C}^{-}}\right)_{j,j\pm1}
= \frac{e^2C_{J}}{4C_{B}^2} +\mathcal{O}\left(\textstyle \frac{C_{J}}{C_{B}}, \frac{C_{g}}{C_{B}}\right),
\end{align}
and anomalous corner elements $\left(\mathsf{E_{C}^{-}}\right)_{1,N}= -E_{C1}^{-}$.

\section{Leading-Order Processes for Creation and Annihilation of Excitons}
\label{appendix:KNsteps}

Creation and annihilation of an odd number of excitons constitutes the leading-order mechanism for breaking the degeneracy between the two potential minima in the effective exciton model. In this appendix, we establish that leading-order processes occur at $N$-th order in perturbation theory.

\begin{proposition}
Degeneracy between the two potential minima at $\{\phi_{j}^{-}=0\}$ and  $\{\phi_{j}^{-}=\pi\}$ is broken by perturbative processes that create or annihilate an odd number of excitons. Perturbative processes that leave the exciton number invariant or change it by an even number do not lead to degeneracy breaking.
\end{proposition}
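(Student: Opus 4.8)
The plan is to produce a $\mathbb{Z}_2$ symmetry whose generator exchanges the two potential wells, to classify each perturbative process by how it transforms under that generator, and then to read off which processes can shift the two minima by different amounts. I will interpret ``breaking the degeneracy of the minima'' as a process contributing an operator $O$ to the effective Hamiltonian for which $\mel{0}{O}{0}\ne\mel{\pi}{O}{\pi}$, i.e.\ which changes the values of the effective potential at $\{\phi_j^-=0\}$ and $\{\phi_j^-=\pi\}$ by unequal amounts.

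I would begin by introducing the operator $P=\exp\!\big(i\pi\sum_{j=1}^{N}\hat n_j^-\big)$ on the exciton subspace, where the $\hat n_j^-$ have integer spectrum, so that $P^\dagger=P$ and $P^2$ is the identity. Because $\hat n_j^-$ generates translations of $\phi_j^-$, the operator $P$ implements the simultaneous shift $\phi_j^-\mapsto\phi_j^-+\pi$ for all $j$, which in the $2\pi$-periodic exciton potential of Eq.~\eqref{effHam} is exactly the map interchanging the minimum at $\{\phi_j^-=0\}$ with the one at $\{\phi_j^-=\pi\}$. Hence $P$ takes the state $\ket{0}$ localized in the first well to the state $\ket{\pi}$ localized in the second, and (fixing phases) $P\ket{0}=\ket{\pi}$, $P\ket{\pi}=\ket{0}$. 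Since $P$ depends only on the $\hat n_j^-$, it commutes with the entire charging energy, offset charges included; I would also note for later use that, besides the charging term, $H^{(2)}$ of Eq.~\eqref{effHam} contains only $\cos(\phi_{j+1}^--\phi_j^-)$ (exciton number unchanged) and $\cos(\phi_1^-+\phi_N^-)$ (exciton number changed by $\pm2$).

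The central step is to determine the $P$-parity of an arbitrary perturbative contribution to the effective Hamiltonian. From Eq.~\eqref{pmHam}, a single Josephson vertex on link $(k,k+1)$ expands into terms $e^{\pm\frac{i}{2}\Delta\phi^+}e^{\pm\frac{i}{2}\Delta\phi^-}$, i.e.\ it shifts $n_k^-$ and $n_{k+1}^-$, together with the corresponding agiton charges, by $\pm\tfrac12$ each. Composing any number of such vertices and projecting the product back onto the integer-$n^-$, zero-$n^+$ exciton subspace forces the net shift on every rung to be an \emph{integer}, so any term of the effective Hamiltonian is, after projection, a sum of operators of the form $g(\{\hat n_k^-\})\prod_j e^{iq_j\phi_j^-}$ with $q_j\in\mathbb Z$. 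Such an operator changes the total exciton number $\hat N^-=\sum_j\hat n_j^-$ by $\Delta N^-=\sum_j q_j$, and a direct computation gives $P\big(g(\{\hat n_k^-\})\prod_j e^{iq_j\phi_j^-}\big)P=(-1)^{\sum_j q_j}\,g(\{\hat n_k^-\})\prod_j e^{iq_j\phi_j^-}$. Thus the operator $O$ produced by a process with exciton-number change $\Delta N^-$ satisfies $POP=(-1)^{\Delta N^-}O$: processes that leave the exciton number unchanged or change it by an even number are $P$-even, those changing it by an odd number are $P$-odd. In particular the two cosines in $H^{(2)}$ are $P$-even, while every summand of the degeneracy-breaking term $H_{\text{K}}$ in Eq.~\eqref{KNcirc} carries $\Delta N^-=\pm1$ and is $P$-odd.

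Finally I would compare the two minima. A process contributing $O$ shifts the value of the effective potential at the $0$ minimum by $\mel{0}{O}{0}$ and at the $\pi$ minimum by $\mel{\pi}{O}{\pi}=\mel{0}{POP}{0}=(-1)^{\Delta N^-}\mel{0}{O}{0}$. For even $\Delta N^-$ the two shifts are equal, so the minima remain degenerate — this includes ordinary exciton tunneling and the twist-point term, consistent with $[H^{(2)},P]=0$. For odd $\Delta N^-$ the two shifts are opposite, so the degeneracy is lifted by $2\mel{0}{O}{0}$, which is generically nonzero; this is already visible for the leading $m=1$ term $-K\cos\phi_1^-$, whose diagonal elements are $\approx\mp K$ on $\ket{0}$ and $\ket{\pi}$. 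This establishes the proposition. The one step requiring care — and the main obstacle — is the projection argument of the previous paragraph: one must verify that the half-integer agiton excursions of the virtual intermediate states always recombine into an integer net charge shift per rung upon return to the exciton subspace, and that the sum of those shifts equals precisely the number of excitons created or annihilated; everything else is a routine $\mathbb{Z}_2$-symmetry manipulation.
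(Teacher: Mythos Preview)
Your argument is correct and rests on the same underlying observation as the paper's proof: the map $\phi_j^-\to\phi_j^-+\pi$ exchanges the two minima, and an operator that changes total exciton number by $\Delta N^-$ picks up a factor $(-1)^{\Delta N^-}$ under this map. The packaging, however, differs. The paper simply writes the contributed operator as $2\cos\!\big(\sum_k s_k\phi_{j_k}^-\big)$ and evaluates the cosine at $\{\phi_j^-=0\}$ and $\{\phi_j^-=\pi\}$ directly, noting that the argument is an even or odd multiple of $\pi$ according to the parity of $m$. You instead construct the $\mathbb{Z}_2$ generator $P=\exp\!\big(i\pi\sum_j\hat n_j^-\big)$ explicitly, classify every effective operator by its $P$-parity, and deduce $\mel{\pi}{O}{\pi}=(-1)^{\Delta N^-}\mel{0}{O}{0}$. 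Your route is slightly more general---it accommodates charge-dependent prefactors $g(\{\hat n_k^-\})$ and does not presuppose that the perturbative term is a single cosine---and it makes the symmetry content manifest; the paper's direct evaluation is shorter and avoids the need to argue that projection onto the exciton subspace always yields integer $q_j$. The ``care'' you flag in that projection step is real but minor: since the exciton subspace has integer $n_j^-$ for every $j$, any operator that begins and ends there must shift each $n_j^-$ by an integer, so $q_j\in\mathbb{Z}$ automatically.
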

\begin{proof}
Consider a perturbative process creating or annihilating $m$ excitons at positions $j_1,\ldots,j_m\in\{1,\ldots,N\}$ with exciton signs specified by $s_1,\ldots,s_m\in \{-1,+1\}$. This process contributes a term to the effective Hamiltonian with operator content
\begin{equation}
A=\prod_{k=1}^m e^{i s_k \phi_{j_{k}}^{-}} + \text{h.c.} = 2\cos\left(  \sum_{k=1}^m s_k \phi_{j_{k}}^{-} \right).
\end{equation}
Addition of $A$ to the Hamiltonian amounts to a modification of the potential energy. For even exciton number $m$, the cosine argument is zero at $\{\phi_{j}^{-}=0\}$ and an even integer multiple of $\pi$ at $\{\phi_{j}^{-}=\pi\}$, thus changing the two potential minima equally and leaving the degeneracy intact. By contrast, for odd exciton number the cosine argument at $\{\phi_{j}^{-}=\pi\}$ is an odd integer multiple of $\pi$, thus leading to an overall sign change between the potential-minima shifts at $\{\phi_{j}^{-}=0\}$ and $\{\phi_{j}^{-}=\pi\}$, effectively breaking the degeneracy.
\end{proof}

We will next prove two central statements. First, the leading order for creation or annihilation of an odd number of excitons is order $N$ (where $N$ is the number of big capacitors in the circuit). Second, every such $N$-th order process resulting in odd-number changes in exciton population leads to charge alternation: a `$+$' exciton is always followed by a `$-$' exciton, so that when circling the edge of the M\"obius circuit, a Cooper-pair charge is always followed by a Cooper-pair hole, and vice versa. An example of this is 
\begin{align*}
    A_1 &= e^{-i \phi_{2}^{-}}e^{+i \phi_{3}^{-}}e^{-i \phi_{N}^{-}} + \text{h.c.}\\
        &= e^{-i \phi_{2}}e^{+i \phi_{3}}e^{-i \phi_{N}}e^{i \phi_{N+2}}e^{-i \phi_{N+3}}e^{i \phi_{2N}} + \text{h.c.},
\end{align*}
as shown in Fig.\ \ref{excitoncreation}. The 3-exciton creation process described by
$B=e^{i\phi_{1}^{-}}e^{i\phi_{3}^{-}}e^{i\phi_{7}^{-}} + \text{h.c.}$, on the other hand, does not obey charge alternation and is of order higher than $N$.

We first prove that odd-number exciton creation with charge alternation requires an $N$-th order process. To assess the minimal order of the perturbative term for exciton creation, we note that the process separates positive and negative charges and moves them in such a fashion to ultimately recover an exciton configuration. Each step of moving a charge along the circuit circumference is achieved by an operator from the perturbing Hamiltonian $H_J$, such as $e^{i\phi_j}e^{-i\phi_{j+1}}$, and increases the order of the perturbative process by one.

Our proof showing that order $N$ is the minimum required order relies on mapping our problem to a special instance of the so-called assignment problem known from combinatorial optimization \cite{Kuhn1955,Karp1975,Werman1986}, formulated as follows. Consider two ordered sets $M=\{m_{1},m_{2}, \cdots,m_{n}\}$ and $P=\{p_{1},p_{2},\cdots,p_{n}\}$ which here denote the $n$ positions of minus and plus charges on the circuit. Each minus charge is generated by charge separation and increasing the relative difference to some plus charge. The perturbative order of the creation process is thus ascertained by assigning each minus charge to a plus charge and adding up their ``spatial'' separations. The order of a process is equal to the cost $C$ of a particular assignment, given by
\begin{equation}
C = \sum_{i,j}C({m_i,p_j})X_{ij}.
\end{equation}
Here, assignments are recorded by the $n{\times}n$ permutation matrix $X$ with  $X_{ij}=1$ if $m_{i}$
is assigned to $p_{j}$; otherwise, $X_{ij}=0$. Distance between positions on the circle is measured by
\begin{equation}
C({m_i,p_j})=\min\Big[(m_i-p_j) \text{ mod } 2N, \,(p_j- m_i) \text{ mod } 2N\Big].
\end{equation}
Determining the minimum perturbative order required to achieve the desired exciton creation thus corresponds to finding the optimal assignment $X$ which minimizes the cost $C$. We first show that nearest-neighbor assignment on the circle for creation of an odd number of excitons obeying charge alternation leads to a cost of $N$, and subsequently prove that this assignment is optimal. (Hence, order $N$ is the leading order for odd-number exciton creation.)

\begin{proposition}
\label{nearestneighbor}
For creation of an odd number of excitons obeying charge alternation, nearest-neighbor assignment has cost $N$.
\end{proposition}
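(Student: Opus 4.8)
The plan is to exploit the antipodal symmetry of the charge configuration produced by charge-alternating exciton creation. First I would record explicitly the configuration to which the nearest-neighbor assignment is applied: if the created excitons sit on rungs $i_1<\cdots<i_m$ with the alternating signs of Eq.~\eqref{KNcirc}, then exciton $k$ deposits a Cooper-pair charge of sign $(-1)^k$ on node $i_k$ and one of sign $(-1)^{k+1}$ on node $N+i_k$. Hence the $2m$ charges occupy, in cyclic order around the $2N$-node circle, the nodes $i_1,\ldots,i_m,\,N+i_1,\ldots,N+i_m$, and --- this is precisely the content of the charge-alternation hypothesis, which is compatible with the given sign pattern exactly because $m$ is odd --- their signs alternate as one traverses the circle. ``Nearest-neighbor assignment'' then means pairing each minus charge with the plus charge that is its immediate clockwise neighbor; since consecutive charged nodes carry opposite sign this is a well-defined perfect matching, and its cost is the sum of the $m$ arc-length gaps between the paired adjacent charges.

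The key observation I would then use is that the half-turn rotation $\sigma:\,j\mapsto j+N\ (\mathrm{mod}\ 2N)$ is a symmetry that flips all charge signs: it swaps node $i_k$ with node $N+i_k$, and since exciton $k$'s two charges are opposite, $\sigma$ maps the set of minus-charge positions bijectively onto the set of plus-charge positions. Being a rotation, $\sigma$ preserves arc lengths. Now, because the signs alternate, the $2m$ consecutive gaps around the circle split into two families of size $m$: the ``minus-to-plus'' gaps (running clockwise from a minus charge to the next plus charge --- exactly those used by the nearest-neighbor assignment) and the complementary ``plus-to-minus'' gaps. I would argue that $\sigma$ carries the first family bijectively onto the second: it sends a clockwise arc from a minus to the next plus to a clockwise arc of equal length from a plus to the next minus, and it fixes no gap since $N\not\equiv 0$.

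From here the conclusion is pure bookkeeping: the two gap families have the same total length because $\sigma$ matches them up isometrically; their union is the set of all $2m$ gaps, whose total length is the circumference $2N$; hence each family totals $N$. The nearest-neighbor assignment sums exactly the first family and therefore has cost $N$. As a final sanity check on the cost metric I would note that for $m\ge 2$ each of the $m$ positive gaps in a family is strictly less than their sum $N$ (and for $m=1$ the lone gap equals $N$), so every matched adjacent pair is separated by at most half the circumference; thus the shorter-arc distance $C(m_i,p_j)$ appearing in the cost functional coincides with the direct gap, and the total is indeed $N$.

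I expect the only real subtlety --- and hence the main obstacle --- to be pinning down the combinatorics of the two gap families and verifying cleanly that $\sigma$ interchanges them; everything hinges on the sign-alternation around the circle together with the fact that $\sigma$ is simultaneously a length-preserving rotation and a global sign flip of the charges. Once that is in place the cost computation is immediate, and the separate claim that this assignment is \emph{optimal} is deferred to the following proposition.
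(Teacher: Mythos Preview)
Your argument is correct, and it follows a genuinely different route from the paper's.  The paper proves Proposition~\ref{nearestneighbor} by induction on the (odd) exciton number $m$: the base case $m=1$ is the single pair at distance $N$, and the inductive step inserts two additional sign-alternating excitons between an existing adjacent pair and checks by explicit arithmetic that the nearest-neighbor cost is unchanged.  Your proof instead exploits the global antipodal symmetry $\sigma:j\mapsto j+N$ of the exciton charge pattern: since $\sigma$ is a length-preserving rotation that flips all charge signs and leaves the set of charged nodes invariant, it carries the family of clockwise minus-to-plus gaps bijectively and isometrically onto the complementary family of plus-to-minus gaps, whence each family sums to half the circumference, i.e.\ to $N$.

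The symmetry argument is cleaner and more conceptual; it also makes immediately transparent why the two nearest-neighbor assignments (clockwise and counter-clockwise pairing) both have cost $N$, a fact the paper uses later but only remarks in passing.  The paper's induction, on the other hand, is closer in spirit to the subsequent analysis (e.g.\ Fig.~\ref{fig:add2excitons}), where inserting pairs of excitons is exactly the operation one tracks, so it dovetails more naturally with the surrounding arguments.  Your final check that each paired gap is at most $N$ (so that the shorter-arc metric $C(m_i,p_j)$ agrees with the direct gap length) is a point the paper leaves implicit, and is worth keeping.
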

\begin{proof}
The proof proceeds by induction over the (odd) number $m$ of excitons. For the base case $m=1$ (a single exciton trivially obeys charge alternation), there is only one Cooper-pair creation operator and one annihilation operator. The two generated charges are nearest neighbors with distance $N$, so the cost of the only possible assignment is $N$. 

Next, assume that nearest-neighbor assignment indeed has a cost of $N$ for $m=n$ excitons alternating in sign, and show that the same is true for $m=n+2$ excitons. To do so, decompose the creation operator $A_{n+2}$ for $n+2$ excitons into creation of $n$ excitons,
\begin{align}
A_n&=
e^{-i\phi_{i_{1}}^{-}}e^{i\phi_{i_{2}}^{-}}\cdots
e^{-i\phi_{i_{n}}^{-}} \\\nonumber
&= e^{-i\phi_{i_{1}}}e^{i\phi_{i_{2}}}\cdots e^{-i\phi_{i_{n}}}
	e^{i\phi_{i_{1}+N}}e^{-i\phi_{i_{2}+N}}\cdots e^{i\phi_{i_{n}+N}},
\end{align}
and creation of two additional excitons,  see Fig.\ \ref{fig:add2excitons}. (``$\text{h.c.}$" contributions are omitted from expressions to simplify notation.) For $A_n$, there are two different nearest-neighbor assignments with equal cost $N$: either pairing up $e^{-i\phi_{i_{1}}}$ and $e^{i\phi_{i_{2}}}$, 
or $e^{i\phi_{i_{n}+N}}$ and $e^{-i\phi_{i_{1}}}$. 
\begin{figure}
\includegraphics[width=1.0\columnwidth]{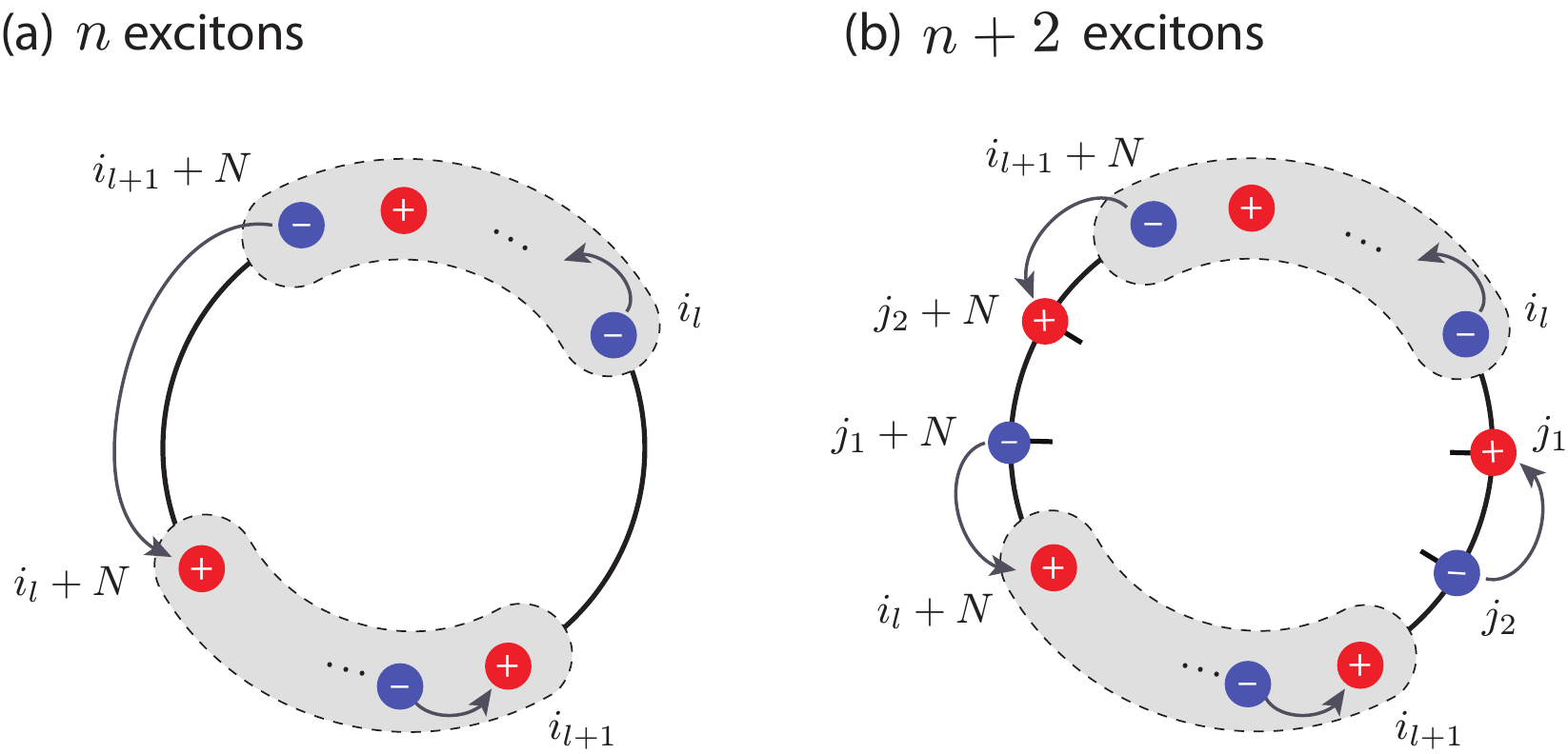}
\caption{(a) Nearest-neighbor assignment for an odd number $n$ of excitons with alternating signs. For odd $n$, the assignment of $i_{l+1}+N$ to $i_l+N$ implies that the partner charges $i_{l+1}$ and $i_l$ are not assigned to each other (or vice versa). (b) Creation of $n+2$ excitons with alternating signs is obtained from creation of $n$ and inserting two nearest neighbor excitons with the appropriate signs. The cost for nearest-neighbor assignment remains $N$.
\label{fig:add2excitons}}
\end{figure}
To maintain sign alternation, the additional two excitons on rungs $j_{1}$ and 
$j_{2}$ must be nearest neighbors, and $A_{n+2}$ has the form
\begin{equation*}
A_{n+2}=
e^{-i\phi_{i_{1}}^{-}}\cdots 
e^{-i\phi_{i_{l}}^{-}}e^{+i\phi_{j_{1}}^{-}}e^{-i\phi_{j_{2}}^{-}}
	e^{+i\phi_{i_{l+1}}^{-}}\cdots e^{-i\phi_{i_{n}}^{-}},
\end{equation*}
where the insertion point is between $i_l$ and $i_{l+1}$.

Without loss of generality, let us assume that for $A_n$, $i_l+N$ and $i_{l+1}+N$ are paired. Since $n$ is odd, this implies that $i_{l}$ and $i_{l+1}$ are not paired. Insertion of the two additional excitons then leads to the nearest-neighbor assignment shown in Fig.\ \ref{fig:add2excitons}(b). The new cost of this assignment can be read off from the figure and is given by
\begin{align*}
    N' &= N - (i_{l+1}+N - i_l -N) + (i_{l+1} + N - j_2 -N) \\
    &\quad +(j_1+N - i_l - N) + (j_2-j_1) = N,
\end{align*}
thus confirming that creation of $n+2$ excitons with alternating signs also carries cost $N$.
\end{proof}

While nearest-neighbor assignment thus leads to a cost of $N$, it remains to be shown that $N$ is the minimum possible cost. The situation is straightforward for the assignment problem on the line, which is known to be solved by a greedy assignment, $m_{i} \rightarrow p_{i}$ \cite{Karp1975}. The assignment problem on the circle, which we face here, requires more thought, see Ref.~\cite{Werman1986}. Therein, Werman et al.\ show that the circular assignment problem can be reduced to the linear one by identifying an appropriate cutting point. Once this point is used for cutting the circle, linear greedy assignment minimizes the cost. As a corollary to this general result, we can therefore state for our case:
\begin{corollary}
The optimal assignment for creation of an odd number
of excitons with alternating signs consists of nearest-neighbor assignment.
\end{corollary}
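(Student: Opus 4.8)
The plan is to obtain the corollary by specializing the circular-assignment theorem of Werman et al.\ \cite{Werman1986} to the charge configuration at hand. That theorem guarantees the existence of a cut point of the circle such that, after linearizing the instance at that point, the greedy order-preserving matching $m_i\to p_i$ (which is the linear optimum, by \cite{Karp1975}) is in fact globally optimal for the circular problem. Hence it suffices to show that, \emph{no matter which cut point is chosen}, the resulting greedy matching is one of the nearest-neighbor assignments of Proposition~\ref{nearestneighbor}; combined with the cost computation already carried out there, this then yields both the optimality statement and the value $N$.

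The decisive input is the hypothesis that the Cooper-pair charges strictly alternate in sign around the circle, $+,-,+,-,\dots$; strict alternation around a cycle forces equally many charges of each sign. Cutting at any point and listing the charges in cyclic order then produces a sequence that is either $+,-,+,-,\dots,+,-$ or $-,+,-,+,\dots,-,+$. In the first case the $i$-th plus charge and the $i$-th minus charge occupy consecutive slots of the sequence and are therefore adjacent on the circle, so the greedy matching pairs every plus charge with the minus charge immediately clockwise of it; in the second case it pairs every minus charge with the plus charge immediately clockwise of it. The key observation is that each of these two matchings is a fixed, purely geometric object, independent of where the cut was placed — it depends only on the sign with which the linearized sequence begins — so over all cut points the greedy construction returns only these two matchings, both of which are exactly the nearest-neighbor assignments appearing in Proposition~\ref{nearestneighbor}.

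Putting the two ingredients together closes the argument: Werman's theorem supplies a cut whose greedy matching is optimal, that matching is necessarily one of the two nearest-neighbor assignments, and Proposition~\ref{nearestneighbor} assigns both of them cost $N$. The residual work is routine bookkeeping — handling the degenerate case in which the cut point coincides with the position of a charge rather than lying strictly between two charges (one simply places the coincident charge at either end of the linearized sequence, and the sequence remains alternating), and checking that no further subtlety enters Werman's reduction for a cost function equal to the shorter of the two circular arcs. I expect the only real care needed is in quoting Werman's result precisely enough that it certifies the \emph{global} circular optimum — not merely some locally greedy matching — is attained at one of these cuts, which matters here because our symmetric instance has many cuts tied in cost; beyond Proposition~\ref{nearestneighbor} no additional computation should be required.
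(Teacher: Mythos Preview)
Your proposal is correct and follows essentially the same route as the paper: invoke Werman et al.\ to reduce the circular problem to a linear one at an appropriate cut, then observe that greedy (order-preserving) matching on an alternating sequence is precisely nearest-neighbor pairing. The paper's proof is a one-line sketch of exactly this argument; your additional observation that only the two clockwise/counter-clockwise nearest-neighbor matchings can arise is also noted in the paper immediately after the corollary.
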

\begin{proof}
This follows from the work by Werman et al., and from the fact that greedy assignment for charges with alternating signs results in nearest-neighbor assignment.
\end{proof}

Note that the nearest-neighbor assignment for an odd number of excitons obeying charge alternation leads to either a clockwise or a counter-clockwise assignment, i.e., all assignment arrows pointing from minus charges to plus charges are oriented clockwise or oriented counter-clockwise. A counter-clockwise assignment is shown in Fig.~\ref{fig:add2excitons}(b). Since the cost of both assignments is $N$, and we understand cost as order of perturbation theory, both contribute at $N$-th order.

Finally, we show that creation of an odd number of excitons not obeying sign alternation has an optimal cost strictly larger than $N$. To facilitate the proof, we require some additional notation borrowed from Ref.~\onlinecite{Werman1986}. Given the ordered sets $M = \{m_{i}\}$ and $P = \{p_{i}\}$ for locations of minus and plus charges, we define
\begin{equation}
\begin{split}
F_{m}(x)=&\left| \{i:\,m_{i}<x\} \right|,\\
F_{p}(x)=&\left| \{i:\,p_{i}<x\} \right|.
\end{split}
\end{equation}
Here, $F_m(x)$ counts the number of minus charges between the origin and position $x$ on the circle; likewise $F_p(x)$ does so for plus charges. The difference 
\begin{equation}
F(x) = F_{p}(x)-F_{m}(x),
\end{equation}
quantifies the net positive charge between the origin and location $x$. $F$ is a piecewise constant function with discontinuities at charge locations. For an alternating charge configuration, $F$ alternates between either $0$ and $+1$, or $0$ and $-1$, such that $f=\max_x F(x)-\min_{x'} F(x')=1$. For non-alternating charge configurations, $f$ exceeds 1. 

\begin{figure}
\includegraphics[width=0.75\columnwidth]{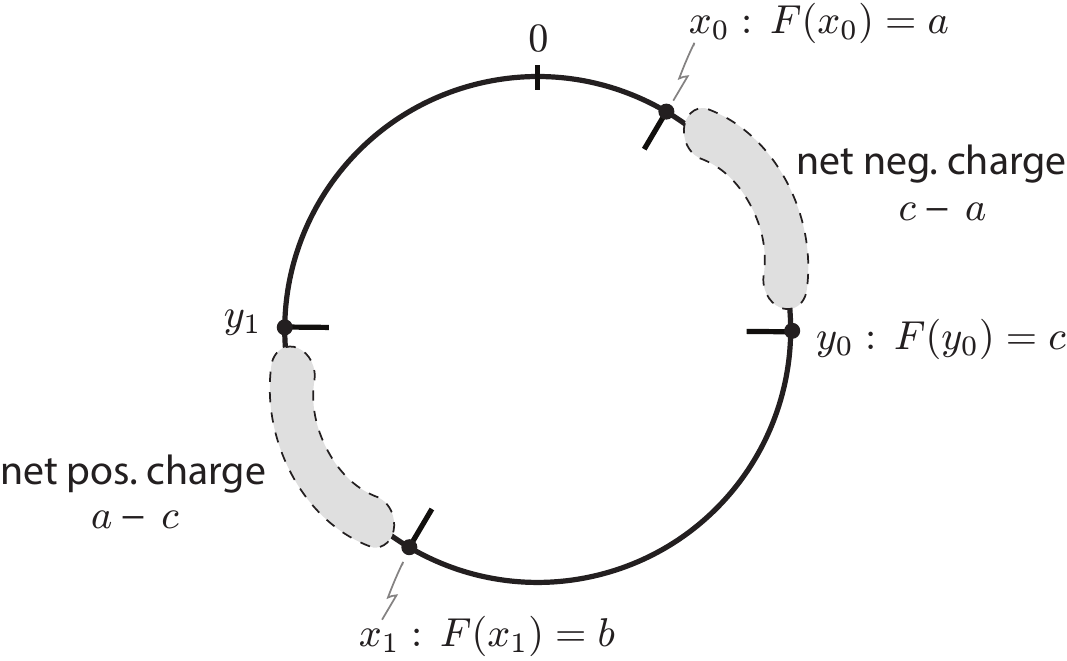}
\caption{Visual aid for the proof of Lemma 1, showing that the point antipodal to $F(x_0)=\text{max}$ must be a point where $F=\text{min}$.}
\label{minmax}
\end{figure}

The following three lemmas prove instrumental in the proof that excitons with non-alternating signs require a higher cost.
\begin{lemma}
Consider a configuration of an odd number of excitons on the circle and let $x_{0}$ be a position where $F$ is maximal. Then the antipodal point $x_{1}=x_{0}+N$ marks a position where $F$ is minimal.
\end{lemma}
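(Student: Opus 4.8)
\emph{Proof proposal.}

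The plan is to reduce the lemma to a single structural identity, namely that the sum $F(x)+F(x+N)$ is \emph{independent} of $x$. Granting this, write $c$ for that constant value. Since $x\mapsto x+N$ is a bijection of the circle, we get $\min_{y}F(y)=\min_{x}F(x+N)=\min_{x}\bigl[\,c-F(x)\,\bigr]=c-\max_{x}F(x)=c-F(x_{0})=F(x_{0}+N)$, which is exactly the assertion. So the whole content is the identity.

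To prove the identity I would use the defining property of excitons: the two Cooper-pair charges constituting a single exciton occupy antipodal nodes $q$ and $q+N$ of the $2N$-site circle. Hence every half-circle arc $[x,x+N)$ contains exactly one of the two charges of each exciton (for $x$ away from charge positions), so it holds $a(x)$ plus charges and $m-a(x)$ minus charges, where $m$ is the exciton number and $a(x)$ counts the plus charges in the arc. Reading off the net positive charge enclosed by this arc gives $F(x+N)-F(x)=a(x)-(m-a(x))=2a(x)-m$. It remains to show $F(x)+a(x)$ is constant: as $x$ sweeps past a plus charge at $q$, $F$ jumps by $+1$ while that plus charge leaves the left end of $[x,x+N)$, so $a$ drops by $1$, and simultaneously the antipode $q+N$ -- a \emph{minus} charge of the \emph{same} exciton -- enters the arc at the right end, leaving $a$ unchanged; thus $F+a$ is unchanged. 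The case of $x$ crossing a minus charge is symmetric. Hence $F(x)+a(x)\equiv\kappa$ for a constant $\kappa$; substituting $a(x)=\kappa-F(x)$ yields $F(x+N)=2\kappa-m-F(x)$, i.e.\ $F(x)+F(x+N)=c$ with $c=2\kappa-m$. Since $m$ is odd, $c$ is odd, which incidentally re-confirms that $\max F\neq\min F$, consistent with $f\geq 1$ for an odd exciton number.

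I do not anticipate a genuine obstacle; the only care needed is the routine bookkeeping for piecewise-constant, one-sidedly continuous step functions. The identity $F(x)+F(x+N)=c$ holds for all $x$ outside the finite set of charge positions and their antipodes, and since $F$ attains its extrema on half-open plateaus one may simply choose $x_{0}$ generic inside a maximal plateau and conclude that $x_{0}+N$ lies inside a minimal one. The real insight -- and the step I would emphasize -- is that the antipodal pairing of exciton charges forces $F$ to be reflected about its mean under the half-circle shift $x\mapsto x+N$, which is precisely the M\"obius structure manifesting itself.
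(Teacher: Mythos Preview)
Your argument is correct. Both your proof and the paper's rest on the same structural observation---that antipodal nodes on the $2N$-circle carry opposite charges (the defining feature of an exciton), so the net charge on any arc equals minus the net charge on the antipodal arc---but the presentations differ. The paper argues by contradiction: supposing some $y_0$ has $F(y_0)<F(x_0+N)$, it reads off $F(y_0+N)$ from this antipodal-arc relation and finds it exceeds $F(x_0)$, violating maximality. You instead recast the antipodal structure as the global identity $F(x)+F(x+N)=c$ (constant), proved via your jump bookkeeping with $a(x)$, and then deduce the lemma in one line from the bijection $x\mapsto x+N$. Your route avoids contradiction and yields a strictly stronger statement: the identity says $F$ is reflected about its mean under the half-shift, from which the lemma, the equality $F_{\max}+F_{\min}=c$, and the parity observation that $c$ is odd all follow at once. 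The paper's version is terser because it invokes the antipodal-charge property at a single pair of intervals rather than tracking it globally.
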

\begin{proof}
The proof is by contradiction and aided visually by Fig.\ \ref{minmax}. Let $F(x_0)=a=\text{max}$, and let $F(x_{1})=b$ at the antipodal point $x_{1}=x_{0}+N$. Now assume that there exists some other point $y_0$ with $x_0 < y_0 < x_1$  that yields an $F$ even smaller: $F(y)=c<b$. (An analogous argument holds for $ y_0 > x_1$.) Then, the interval $[x_0,y_0]$ contains a net negative charge $c-a<0$. Let $y_1=y_0+N$ be the point antipodal to $y_0$. Due to the exciton configuration of charges, the interval $[x_1, y_1]$ must contain the net positive charge $a-c>0$. As a result, we have 
\[
F(y_1) = F(x_1) + a - c = a + (b-c) > a = F(x_0),
\]
in contradiction to the maximality of $F(x_0)$.
\end{proof}

The following lemma states an important property of the points where $F$ is maximal or minimal. This property regards the assignment arrows (also called ``arcs") above these points:
\begin{lemma}
Consider an odd number of excitons not obeying sign alternation. In the optimal assignment obtained from the algorithm by Werman et al., points of maximal and minimal $F$ must have arcs above them.
\end{lemma}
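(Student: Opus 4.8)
The plan is to translate the geometric claim about assignment arcs into an analytic statement about the piecewise-constant function $F$, and then let the antipodal symmetry of exciton configurations do the rest. The key preliminary object is a cost formula for the cut-and-greedy assignments that underlie Werman's algorithm. After cutting the circle at a point $c$ and pairing the $i$-th minus charge with the $i$-th plus charge in cyclic order starting from $c$, a generic point $x$ lies under the $i$-th arc exactly when precisely one of those two charges lies between $c$ and $x$; hence the number of arcs above $x$ equals the absolute net charge contained in the segment from $c$ to $x$, which is $|F(x)-F(c)|$. Summing arc lengths as $\int_0^{2N}(\text{number of arcs over }x)\,dx$, the total cost of the cut-at-$c$ assignment is $\mathcal{C}(c)=\int_0^{2N}|F(x)-F(c)|\,dx$, and by the theorem of Werman et al.~\cite{Werman1986} the optimal circular assignment has cost $\min_c\mathcal{C}(c)$.

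Next I would record the structural input already contained in Lemma 1: because the charges of an exciton configuration come in antipodal pairs of opposite sign, one has $F'(x+N)=-F'(x)$, so $F(x)+F(x+N)=c_0$ for some constant $c_0$. Since translation by $N$ is a measure-preserving involution of the circle, the arclength distribution of $F$ is symmetric about $c_0/2$; in particular the extreme values $M=\max_x F$ and $\mu=\min_x F$ satisfy $c_0-M=\mu$, and the set where $F=M$ has the same measure as the set where $F=\mu$.

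The contradiction step comes last. Suppose some point $x_0$ of maximal $F$ carried no arc in the optimal assignment. Then that assignment is itself a valid linear assignment after cutting at $x_0$, so its cost is at least the optimal cut-at-$x_0$ cost $\mathcal{C}(x_0)$; combined with $\min_c\mathcal{C}(c)\le\mathcal{C}(x_0)$ this forces $\mathcal{C}(x_0)=\min_c\mathcal{C}(c)$, i.e.\ $F(x_0)=M$ is a median value of $F$. But $M$ can be a median only if $F=M$ on at least half the circle, and then the symmetry above forces $F=\mu$ on at least half the circle too, so $F$ takes only the values $M$ and $\mu$ up to a null set. Since $F$ changes by exactly $\pm1$ at each charge, it attains every integer in $[\mu,M]$, so $M-\mu\le1$; but a configuration not obeying sign alternation has $f=M-\mu\ge2$, a contradiction. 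The identical argument with $\mu$ in place of $M$ handles the points of minimal $F$, completing the proof.

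The step I expect to require the most care is the first paragraph: one must count ``arcs over $x$'' consistently for the arcs as they wrap on the original circle rather than on the cut line, confirm that rank-pairing is indeed the optimal linear assignment for the circular-distance cost, and — most importantly — verify that a point carrying no arc genuinely certifies optimality of the corresponding cut within Werman's reduction. Once that bookkeeping is pinned down, the remainder is short, since the antipodal symmetry makes the extreme values of $F$ too rare to serve as medians as soon as $f\ge2$.
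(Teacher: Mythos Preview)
The paper does not actually prove this lemma: its entire ``proof'' reads, verbatim, ``For proof of this lemma the reader is referred to Ref.~\onlinecite{Werman1986}.'' So there is no in-paper argument to compare against; you have supplied a genuine proof where the authors merely cite one.

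Your argument is sound. The cost identity $\mathcal{C}(c)=\int_0^{2N}|F(x)-F(c)|\,dx$ for the cut-and-greedy assignment is exactly the Werman--Peleg--Melter formula, and it reduces the choice of cut to the weighted-median problem. The antipodal relation $F(x)+F(x+N)=F(N)$ follows directly from $\rho(y+N)=-\rho(y)$ for exciton charge configurations, which is a mild strengthening of the paper's Lemma~1 and gives you the equal-measure statement for the level sets $\{F=M\}$ and $\{F=\mu\}$. The contradiction step is clean: if a maximum point $x_0$ carries no arc, cutting there turns the optimal circular assignment into a linear one of the same cost, forcing $\mathcal{C}(x_0)=\min_c\mathcal{C}(c)$; hence $M$ is a median, so $|\{F=M\}|\ge N$, and antipodal symmetry then forces $F\in\{M,\mu\}$ almost everywhere. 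Since $F$ steps by $\pm1$, this gives $f=M-\mu\le1$, contradicting $f\ge2$ for non-alternating configurations (indeed $f\ge3$ here, since $f$ is odd for odd exciton number).

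Your own caveats are well placed but not fatal. The ``arcs over $x$'' count really does equal $|F(x)-F(c)|$ for rank-preserving linear matching (pair $i$ covers $x$ iff exactly one of $p_i,m_i$ lies left of $x$, and for the monotone matching this count is $|F_p(x)-F_m(x)|$). And an arc-free point genuinely certifies that the circular and linear costs coincide for that assignment, since no pairing wraps past the cut. So the bookkeeping closes. What you have written is more than the paper itself provides for this lemma.
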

For proof of this lemma the reader is referred to Ref.~\onlinecite{Werman1986}.

\begin{figure}
\includegraphics[width=0.9\columnwidth]{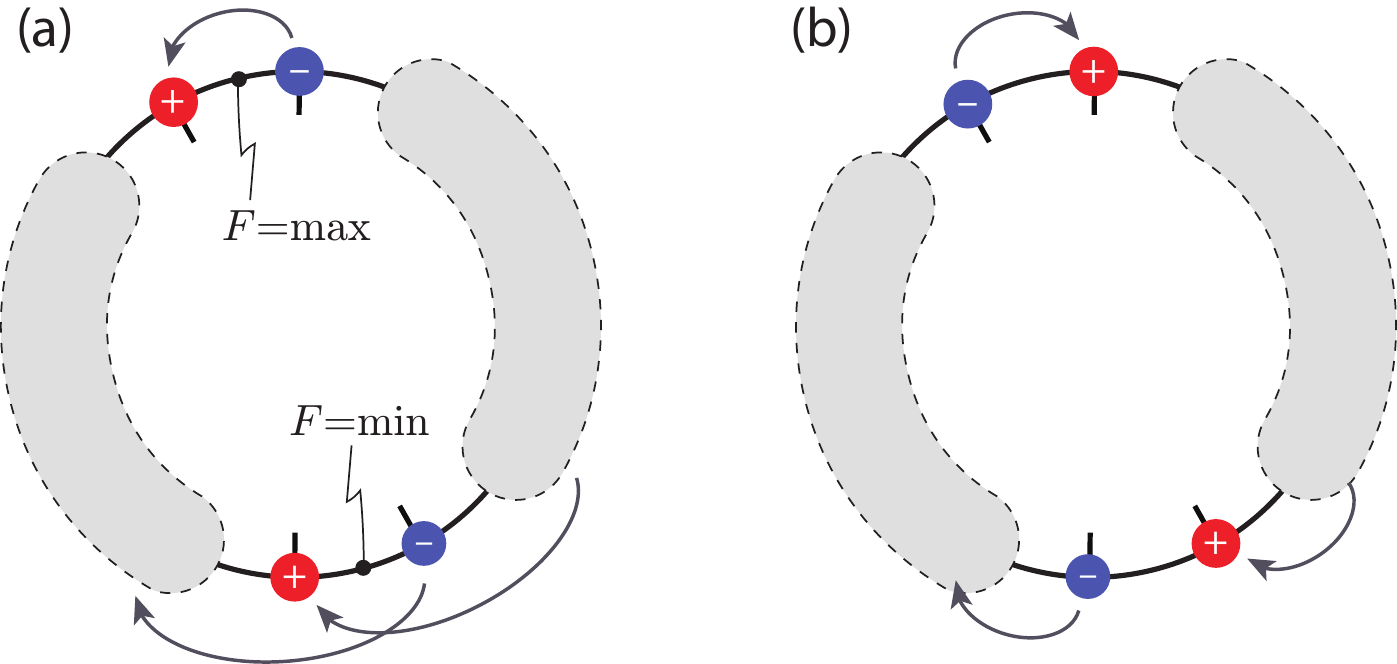}
\caption{(a) Configuration of charges adjacent to points of maximal and minimal $F$. Arrows show one of the possible scenarios of optimal assignments. (b) New charge configuration and optimal assignment obtained after swapping the locations of the excitons from (a). The resulting assignment has a  cost strictly less
than that in (a).}
\label{gen_case}
\end{figure}

\begin{lemma}
Consider once more an odd number of excitons not obeying sign alternation and antipodal points where $F$ is maximal and minimal, see Fig.\ \ref{gen_case}(a). Then the shaded region reached from $F=\text{max}$ traveling clockwise must contain net negative charge.
\end{lemma}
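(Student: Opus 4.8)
The plan is to read off the net charge of the shaded region directly from the step function $F$, using nothing beyond its definition and the antipodality established in Lemma~1. The elementary fact to start from is that, traversing the circle clockwise, $F=F_{p}-F_{m}$ jumps by $+1$ at every plus charge and by $-1$ at every minus charge; consequently, for any clockwise arc running from a point $u$ to a point $v$, the enclosed net charge --- the number of plus charges minus the number of minus charges in the arc --- equals exactly $F(v)-F(u)$.

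First I would fix notation: let $x_{0}$ be a point on the maximal plateau, so that $F(x_{0})=a:=\max_{x}F(x)$, and let $x_{1}=x_{0}+N$ be its antipode, which by Lemma~1 lies on a minimal plateau, $F(x_{1})=b:=\min_{x}F(x)$. The text just above established $f>1$ for any non-alternating configuration, and $f$ is integer-valued, so here $f=a-b\ge 2$. I would then identify the shaded region of Fig.~\ref{gen_case}(a) with the half-circle emanating clockwise from $x_{0}$ and terminating at the antipodal minimum $x_{1}$. Applying the elementary fact above to this arc gives its net charge as $F(x_{1})-F(x_{0})=b-a=-f\le -2<0$: the region contains strictly more minus charges than plus charges, which is exactly the claim.

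The one place needing care is the precise placement of the region's two endpoints relative to the charges sitting at the edges of the maximal and minimal plateaus (or, if the figure instead delimits the region by the arcs that Lemma~2 guarantees above those plateaus, relative to the arc endpoints). For any endpoint choice consistent with the figure one finds a net charge of at most $b-a+1=-(f-1)\le -1$, hence strictly negative, with the generic value being $b-a=-f\le -2$. I expect this bookkeeping, together with fixing the clockwise orientation convention inherited from Ref.~\onlinecite{Werman1986}, to be the only genuine obstacle; once the shaded region is recognized as the clockwise arc from the maximum to the antipodal minimum, the lemma is immediate from the definition of $F$ and Lemma~1, and it supplies precisely the input for the swap argument of Fig.~\ref{gen_case}(b), where this net negativity is what forces the reassignment cost to drop strictly.
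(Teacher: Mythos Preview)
Your approach---reading off the net charge of an arc as the increment of $F$ across it---is exactly the mechanism the paper uses, and your invocation of Lemma~1 to pin the antipodal minimum is correct. The gap lies precisely in the endpoint bookkeeping you flagged as ``the only genuine obstacle,'' and it is less innocuous than your hedge suggests.

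The shaded region in Fig.~\ref{gen_case}(a) is \emph{not} the full half-circle from $x_0$ to $x_1$: it is that half-circle with the two charges immediately flanking the extremal plateaus removed. Both of those charges are negative (at a maximum of $F$ the first charge encountered clockwise is a $-$; approaching a minimum the last charge encountered is also a $-$). Hence the net charge $a$ of the shaded region is $-f+2$, not $-f$ and not your worst-case $-(f-1)$; your hedge undercounts the excluded boundary charges by one. Equivalently, this is the paper's relation $F_{\min}=F_{\max}+a-2$.

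With the correct value $a=-(f-2)$, the bound $f\ge 2$ you extract from non-alternation alone yields only $a\le 0$, not $a<0$. The missing ingredient is parity. Because the two charges of each exciton sit at antipodal nodes, any half-circle contains exactly $m$ charges when $m$ excitons are present; for odd $m$ the half-circle therefore carries an odd number of charges, and so does the shaded region (odd minus two). Its net charge $a$ is thus odd, ruling out $a=0$; and $a=1$ would give $f=1$, i.e., alternation. This forces $a\le -1$. Equivalently, $f$ itself is odd for odd $m$, so non-alternation gives $f\ge 3$ rather than merely $f\ge 2$. This parity step is exactly what the paper invokes, and without it your argument does not close.
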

\begin{proof}
First, note that the charges surrounding the points of maximum and minimum $F$ must have signs as indicated in Fig.\ \ref{gen_case}(a). Since the overall exciton number is odd, the shaded region of interest must contain an odd number of charges, so the net charge $a$ in that region is odd. Further, we have 
\[ F_\text{min}= F_\text{max} + a - 2 < F_\text{max}. \]
From this, we conclude $a<2$. We rule out $a=1$, as this would imply $f=F_\text{max}-F_\text{min}=1$ and hence charge alternation. $a=0$ is not possible since $a$ is odd. Hence, we have $a<0$, so the net charge in the region of interest is negative.
\end{proof}

These lemmas are now utilized in the proof of the following statement about non-alternating exciton configurations.
\begin{proposition}
For configurations of an odd number of excitons not obeying sign alternation, the cost of the optimal assignment is strictly larger than $N$.
\end{proposition}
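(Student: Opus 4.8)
The plan is to combine the three preceding lemmas with Proposition~\ref{nearestneighbor} and its corollary --- which together say that a sign-alternating configuration of an odd number of excitons has optimal assignment cost exactly $N$ --- in a minimal-counterexample/exchange argument. Suppose, for contradiction, that there exists a configuration of an odd number of excitons \emph{not} obeying sign alternation whose optimal assignment has cost $\le N$. Since assignment costs are nonnegative integers, we may pick such a configuration $\mathcal{C}$ whose optimal cost $C$ is as small as possible.

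For this configuration I would apply Lemma~1 to fix a pair of antipodal points $x_0$ and $x_1 = x_0 + N$ at which $F$ is maximal and minimal, respectively. Because crossing a single $\pm$ charge changes $F$ by $\pm1$, the two charges bounding the constant segment of $F$ that contains $x_0$ are forced to be one plus and one minus (and symmetrically around $x_1$), which is precisely the local picture drawn in Fig.~\ref{gen_case}(a). Lemma~2 then guarantees that the optimal assignment of $\mathcal{C}$ routes an arc over $x_0$ and an arc over $x_1$, while Lemma~3 pins down that the clockwise region between the extremal points carries net negative charge; together these determine the minus/plus endpoints of the two distinguished arcs on either side of $x_0$ and $x_1$.

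The heart of the argument is the exchange step: swap the rung positions of the two excitons adjacent to $x_0$ and $x_1$ in Fig.~\ref{gen_case}(a) to produce a new configuration $\mathcal{C}'$ with the same, still odd, number of excitons. Starting from the optimal assignment of $\mathcal{C}$ and re-routing only the (at most four) arcs incident to the displaced charges, I would repeat the length bookkeeping used in the proof of Proposition~\ref{nearestneighbor} --- writing the new total cost as the old one minus the contracted clockwise spans plus the newly created ones --- but now the non-alternating geometry makes the net change strictly negative, as indicated in Fig.~\ref{gen_case}(b). Hence $\mathrm{cost}_{\mathrm{opt}}(\mathcal{C}') \le (\text{cost of this re-routed assignment}) < C \le N$. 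If $\mathcal{C}'$ is sign-alternating this contradicts Proposition~\ref{nearestneighbor} and its corollary, which force its optimal cost to equal $N$; if $\mathcal{C}'$ still violates alternation it contradicts the minimality of $C$. Either way we obtain a contradiction, establishing that no odd non-alternating configuration can have optimal cost $\le N$. (Equivalently, one may phrase the induction as a descent in the height $f = \max_x F(x) - \min_x F(x)$, each exchange lowering $f$ while keeping the exciton number odd until an alternating configuration of cost $N$ is reached.)

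The step I expect to be the main obstacle is making the exchange airtight. One has to (i) specify unambiguously which two excitons to move when several charges cluster near $x_0$ or $x_1$, or when an extremal point carries more than one arc; (ii) check that the re-routed pairing is an admissible assignment and that its cost drops \emph{strictly}, treating the borderline cases where a moved charge coincides with an arc endpoint or where the two swapped excitons land on the same rung (the latter only annihilates a pair, lowering the exciton number by two and preserving its parity, so the argument survives); and (iii) reconcile the move with the global cutting-point characterization of optimal circular assignments (Werman et al.), noting that it is enough to exhibit \emph{one} assignment of $\mathcal{C}'$ cheaper than $C$, so full re-optimization of $\mathcal{C}'$ need not be analyzed --- but the length accounting across the cut still has to be done carefully.
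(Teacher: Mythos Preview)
Your proposal is essentially correct and follows the same exchange-argument strategy as the paper's own proof: use Lemmas~1--3 to pin down the local picture near the extremal points of $F$, swap the adjacent excitons there, observe that the re-routed assignment has strictly smaller cost, and conclude by comparison with the alternating case. The paper phrases this as induction on $f=\max F-\min F$ with base case $f=3$; you phrase it as a minimal-cost counterexample, and you yourself note the equivalence in your final parenthetical.

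One small discrepancy worth flagging: the paper performs the swap at \emph{every} location where $F=F_{\max}$ simultaneously, which guarantees $f'=f-2$ and makes the descent in $f$ clean. You describe swapping only at the single antipodal pair $(x_0,x_1)$. For your minimal-cost framing this is harmless, since you only need the cost to drop strictly, not $f$; but your parenthetical ``each exchange lowering $f$'' is not literally true unless you adopt the paper's all-at-once swap. The obstacles you list in your final paragraph (ambiguity when several maxima exist, arc bookkeeping, coincident rungs) are real but are exactly the details the paper's Fig.~\ref{gen_case} and the accompanying case analysis handle, so you have correctly anticipated where the work lies.
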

\begin{proof}
The proof is by induction on $f=\max_x F(x)-\min_{x'} F(x')$, starting with the base case $f=3$. According to Lemma 3, the points of minimal and maximal $F$ must have arcs overhead in the optimal assignment. Since this assignment is obtained by greedy pairing, nested arcs cannot occur. This leaves only two possibilities of optimal assignments for the charges adjacent to the points with extremal $F$. The first is shown in Fig.~\ref{gen_case}(a), where the top two charges are assigned to each other and the bottom ones have crossing arcs. The second possibility places crossing arcs both on the top and bottom pair of charges. By contrast, direct pairing of both the top two charges and the bottom two charges does not yield an optimal assignment as can be seen as follows. Suppose the top two charges are paired. Lemma 3 asserts that the shaded region on the right of Fig.\ \ref{gen_case}(a) contains net negative charge. Since minimal cost is achieved by greedy assignment in clockwise fashion, the plus charge on the bottom must be assigned to a negative charge in this shaded region. The negative charge on the bottom must thus be assigned to a positive charge in the shaded region on the left, leading to crossing arcs as shown in the figure.

We assume an optimal assignment of the type depicted in Fig.~\ref{gen_case}(a);  the following arguments also carry over to the case of crossing arcs for both segments. Locate all positions where $F=F_\text{max}$ and swap the positions of the two adjacent excitons. This changes the maximum and minimum values of $F$ to $F_\text{max}'=F_\text{max}-1$ and $F_\text{min}'=F_\text{min}+1$, and hence yields $f'=f-2$. For $f=3$, the charge swaps thus produce an exciton configuration with sign alternation. The resulting assignment, shown in Fig.\ \ref{gen_case}(b) has a cost strictly lower than the assignment in the non-alternating case of Fig.\ \ref{gen_case}(a), completing the proof for the base case of $f=3$.

For the induction step, assume that any odd-number exciton configuration with $f\leq f_{0}$ has an optimal-assignment cost strictly greater than $N$, and show that this is true as well for $f=f_{0}+2$. (Note that $f$ can only take on odd-integer values for an odd number of excitons.) The argument is analogous to that employed for the base case $f=3$. Identifying locations of maximal $F$ and swapping the adjacent excitons, one finds a new configuration with a strictly lower cost and $f'=f-2=f_{0}>1$. By the inductive hypothesis, this has an optimal-assignment cost strictly larger than $N$, and so the assertion is proven.
\end{proof}

\section{Derivation of the Effective Hamiltonian}
\label{appendix:Nthorder}
In this appendix, we sketch the derivation of terms relevant for the effective exciton Hamiltonian. The procedure is based on a Schrieffer-Wolff transformation, closely following  Ref.~\onlinecite{Shavitt1980}. The starting point is the full circuit Hamiltonian $H= H_{C}+H_{J}$ with charging terms
\begin{align}
H_C = \sum_{i,j=1}^{2N}4(\mathsf{E_{C}})_{ij}
    (n_{i}-n_{gi})(n_{j}-n_{gj})
\end{align}
and Josephson tunneling terms
\begin{align}
H_J =-\frac{E_{J}}{2} \sum_{j=1}^{2N} e^{i(\phi_{j+1}- \phi_j)}e^{-i\phi_\text{ext}/2N} + \text{h.c.}
\end{align}

 The unperturbed Hamiltonian $H_C$ divides the Hilbert space into a low-energy subspace $\alpha$ and a high-energy subspace $\gamma$ -- the former spanned by low-lying exciton charge states (eigenstates of $n_j^-$), the latter by agiton charge states (eigenstates of $n_j^+$). The corresponding unperturbed eigenenergies and eigenstates are denoted $E_{i,\alpha},\,E_{j,\gamma}$ and $|i,\alpha\rangle,\,|j,\gamma\rangle$. Cooper-pair tunneling  acts as a perturbation $H_J=V_X+V_D$, coupling the two energy manifolds via its block-off-diagonal component 
\[
V_{X}=\sum_{i,j}|i,\alpha\rangle \langle i,\alpha |H_J| j,\gamma\rangle \langle j,\gamma| + \text{h.c.},
\]
and  individual states within the high-energy manifold via its block-diagonal part 
\[
V_{D}=\sum_{j,j'}|j,\gamma\rangle \langle j,\gamma |H_J| j',\gamma\rangle \langle j',\gamma|.
\]
(Matrix elements between states in the low-energy subspace vanish.) 
 
Using a Schrieffer-Wolff transformation, we construct an effective Hamiltonian $H'=e^{-G}He^{G}$ which incorporates second-order exciton-hopping terms, and $N$-th order degeneracy-breaking terms. (All other terms of order smaller than $N$ cannot break degeneracy and are omitted.) The generator $G$ of the unitary transformation is anti-Hermitian and purely block off-diagonal. Systematic construction of $G$ and of the resulting low-energy Hamiltonian \[ P_\alpha H' P_\alpha = P_\alpha H_0 P_\alpha + W^{(2)} + W^{(N)} \] proceeds via expansion in $H_J$ and iterative employment of the Baker-Campbell-Hausdorff relation, see Ref.\ \onlinecite{Shavitt1980}. To second order, this yields the exciton-hopping terms
\begin{align}
\label{eq:W2}
W^{(2)}&=\frac{1}{2}\ket{i,\alpha}\bra{i,\alpha}
        H_{J}\ket{k,\gamma}\bra{k,\gamma}H_{J}
        \ket{j,\alpha}\bra{j,\alpha}\\\nonumber
        &\quad\times\left(\frac{1}{E_{i,\alpha}-E_{k,\gamma}}+
        \frac{1}{E_{j,\alpha}-E_{k,\gamma}}\right),
\end{align}
where here and in the following, summation over repeated Latin indices is implied. In this perturbative path for exciton hopping, a single high-energy virtual state $\ket{k,\gamma}$ is accessed. Calculation of the energy denominators in Eq.~\eqref{eq:W2} in principle depends on the states $\ket{i,\alpha}$ and $\ket{j,\alpha}$ and their energies $E_{i,\alpha}$ and $E_{j,\alpha}$, respectively. However, for $N\lesssim20$ the smallest charging-energy matrix element of the agiton coordinates, $E_{C\frac{N}{2}}^{+}$, is much larger than the largest charging-energy matrix element of the exciton coordinates, $E_{C0}^{-}$. Therefore, we may neglect exciton charging energies in the calculation of Eq.~\eqref{eq:W2}, leading to the simplified expression
\begin{align*}
W^{(2)}=\ket{i,\alpha}\bra{i,\alpha}
        H_{J}\ket{k,\gamma}\frac{1}{-E_{k,\gamma}}\bra{k,\gamma}H_{J}
        \ket{j,\alpha}\bra{j,\alpha}
\end{align*}
The virtual state $\ket{k,\gamma}$ accessed depends on the initial state $\ket{j,\alpha}$. For instance, consider an exciton tunneling from rung $\ell$ to rung $\ell+1$. The initial state is given by $\ket{\{n_{i}^{+}=0\},\{n_{i}^{-}\}}$. There are two possible virtual agiton states, namely
\[
\ket{a\pm}=\ket{
\begin{array}{cc}	
n_{\ell}^{+}=\mp\frac{1}{2}, & n_{\ell+1}^{+}{=}\pm\frac{1}{2}\\
n_{\ell}^{-}{=}n_{\ell}-\frac{1}{2}, & n_{\ell+1}^{-}{=}n_{\ell+1}+\frac{1}{2}
\end{array}
}
\]
with energies $E_{a}^{\pm}$ that can be accessed. We thus find 
\begin{widetext}
\begin{align}
W^{(2)}=\sideset{}{'}\sum_{\ell=1}^{2N}\bigg( &\ket{i,\alpha}\bra{i,\alpha}
        e^{i\phi_{\ell+1}}e^{-i\phi_{\ell}}e^{-i\phi_{\text{ext}}/2N} \ket{a-}\frac{1}{-E_{a}^{-}}\bra{a-}e^{-i\phi_{\ell+N+1}}e^{i\phi_{\ell+N}}e^{i\phi_{\text{ext}}/2N}\ket{j,\alpha}\bra{j,\alpha}  \\ \nonumber
        + &\ket{i,\alpha}\bra{i,\alpha} e^{-i\phi_{\ell+1}}e^{i\phi_{\ell}}e^{i\phi_{\text{ext}}/2N}\ket{a+} \frac{1}{-E_{a}^{+}}\bra{a+} e^{i\phi_{\ell+N+1}}e^{-i\phi_{\ell+N}}e^{-i\phi_{\text{ext}}/2N}
        \ket{j,\alpha}\bra{j,\alpha}\bigg),
\end{align}
\end{widetext}
where it is clear now that the external flux drops out exactly. When neglecting exciton energies as compared to agiton coordinates, the intermediate-state energies become independent of the exciton quantum numbers: 
\begin{align*}
E_{a}^{\pm}&=2\Delta E_{j}^{\pm} 
    =2(E_{{\text C}0}^{+}-E_{{\text C}1}^{+}) \\ \nonumber
    &\pm 4\left[(\mathsf{E_{C}}^{+})_{m,j}-(\mathsf{E_{C}}^{+})_{m,j+1}\right]
    n_{gm}^{+}+\mathcal{O}(E_{C0}^{-}),
\end{align*}
see Eqs.\ \eqref{ecpm}--\eqref{ec-1} for definitions of charging energies involved. This simplification allows for the sum over initial states indexed by $j$ to be performed, yielding
\begin{align}\nonumber
W^{(2)}=&-\sum_{j=1}^{N-1}\frac{E_{J}^2}{4}
	\left(\frac{1}{\Delta E_{j}^{+}}
	    +\frac{1}{\Delta E_{j}^{-}}\right) \cos(\phi_{j+1}^{-}-
	\phi_{j}^{-})\\ 
	&-\frac{E_{J}^2}{4}
	\left(\frac{1}{\Delta E_{N}^{+}}
	    +\frac{1}{\Delta E_{N}^{-}}\right)\cos(\phi_{1}^{-}+\phi_{N}^{-}),
\end{align}
where the identification $e^{i\phi_{j}^{-}}=\sum_{n}\ket{n_{j}^{-}=n+1}\bra{n_{j}^{-}=n}$ has been used and terms merely introducing energy renormalization of charging energies are omitted. The resulting exciton-hopping strengths are given by
\begin{equation}
\label{eq:JDef}
J_{j}=\frac{E_{J}^2}{4}
	\left(\frac{1}{\Delta E_{j}^{+}}
	    +\frac{1}{\Delta E_{j}^{-}}\right).
\end{equation}
For vanishing agiton offset charges, $n_{gj}^{+}=0$, the exciton-hopping strengths simplify to the uniform expression
\begin{equation}
J=\frac{E_{J}^2}{2E_{{\text C}0}^{+}-2E_{{\text C}1}^{+}}
       =\frac{E_{J}^2}{2E_{C_{J}}}+\mathcal{O}\left(\frac{C_{g}}{C_{J}},\frac{1}{N}\right).
\end{equation}

Next, we consider the leading-order process for degeneracy-breaking of the potential minima in the effective model: the creation or annihilation of an odd number of excitons with sign alternation. Including only these degeneracy-breaking terms, we find the relevant $N$-th order contribution 
\begin{align}
\nonumber
W^{(N)}&=\frac{1}{2}|i,\alpha\rangle\langle i,\alpha|\prod_{\ell=1}^{N-1}(H_{J}|k_\ell,\gamma\rangle
    \langle k_\ell,\gamma |) H_{J} |j,\alpha\rangle \langle j,\alpha | \\  &\times\Bigg[
		\frac{1}{\prod_{\ell=1}^{N-1}(E_{i,\alpha}-E_{k_\ell,\gamma})}
	+\frac{1}{\prod_{\ell=1}^{N-1}(E_{j,\alpha}-E_{k_\ell,\gamma})} \Bigg]
	\label{NPT}
\end{align}
where $H_{J}$ appears $N$ times and there are $N-1$ energy denominators representing the cost of accessing virtual states from the high-energy manifold. Based on this expression and the results from Appendix~\ref{appendix:KNsteps}, we find that $W^{(N)}$ reduces to the Hamiltonian
\begin{equation}
\label{eq:KOpp}
H_{\text{K}} = -K\cos \frac{\phi_{\text{ext}}}{2} \sum_{m\leq N}^{\text{odd}}
\sum_{i_{1}<i_{2}<\cdots<i_{m}}
\cos \bigg[ \sum_{j=1}^{m}(-1)^{j}\phi_{i_{j}}^{-} \bigg].
\end{equation}
Here, $K$ is the amplitude for creation and annihilation of an odd number $m$ of excitons with sign alternation. The inner sum runs over all ordered sequences of an odd number of rung indices $i_{n}$, indicating the positions of excitons. Simple combinatorics reveals the number of terms in the inner sum of Eq.~\eqref{eq:KOpp} as follows. Given a number of exciton creation/annihilation terms $m$, there are $N-m$ choices of where to place the remaining empty rungs. Therefore there are $\binom{N}{m}$ terms in the inner sum of Eq.~\eqref{eq:KOpp} for each $m$. Summing over all of these contributions yields $\sum_{m\le N}^\text{odd} \binom{N}{m}=2^{N-1}$  \cite{gradshteyn2007}. 

Given the various perturbative paths and corresponding energy denominators associated with Eq.\ \eqref{NPT} and contributing to Eq.\ \eqref{eq:KOpp}, it is not immediately clear that all amplitudes can be approximated by the same constant $K$. While it is clear that  $K\sim (E_{J}/2)^{N}$, the computation and approximation of energy denominators is more involved. To do so, we must track the high-energy virtual states accessed in the perturbative paths.

We illustrate the procedure for the perturbative paths contributing to the creation of a single exciton on rung 1, associated with the operator $\exp(i\phi_{1}^{-})$. According to Appendix \ref{appendix:KNsteps}, the relevant $N$-th order perturbative paths involve either exclusively clockwise transfer of Cooper pairs, or exclusively counter-clockwise transfer. The counter-clockwise contributions are summarized by
\begin{widetext}
\begin{align}
\label{eq:A}
A=
&\frac{1}{2}\left(\frac{E_J}{2}\right)^N e^{i\phi_\text{ext}/2} \sum_{p\in S_N} 
|i,\alpha\rangle \langle i,\alpha |
\left(
\prod_{\ell=1}^{N-1} e^{i\phi_{p(\ell)}-i\phi_{p(\ell)+1}} |k_\ell,\gamma\rangle \langle k_\ell,\gamma|\right)
e^{i\phi_{p(N-1)}-i\phi_{p(N-1)+1}}|j,\alpha\rangle \langle j,\alpha|
\\ \nonumber  &\times\Bigg[
		\frac{1}{\prod_{\ell=1}^{N-1}(E_{i,\alpha}-E_{k_\ell,\gamma})}
	+\frac{1}{\prod_{\ell=1}^{N-1}(E_{j,\alpha}-E_{k_\ell,\gamma})} \Bigg],
\end{align}
\end{widetext}
where the summation is over all permutations $p(n)$ of the numbers $1\le n \le N$. An analogous expression is obtained for clockwise perturbative paths.

The energy denominators in Eq.~\eqref{eq:A} are obtained by tracking the high-energy virtual states accessed. The states involved in one virtual process generally differ from those in another, depending on the permutation $p(n)$. Because there are $N!$ such permutations, the difficulty of carrying out this sum rapidly increases with $N$. As in the calculation of Eq.~\eqref{eq:W2}, we neglect exciton-charging energies. This approximation leads to a critical simplification of the energy denominators in Eq.~\eqref{eq:A}; it allows us to sum over each permutation $p(n)$ where the initial state is a circuit devoid of charge, and to ignore exciton charging energies in the intermediate states as compared to agiton energies. The sum in Eq.~\eqref{eq:A} and its clockwise counterpart can be carried out numerically, yielding the expression $\frac{K}{2}\cos(\phi_{\text{ext}}/2)\exp(i\phi_{1}^{-})$.

To confirm that $K$ is the same for all terms in Eq.~\eqref{eq:KOpp}, consider the following. As depicted in Fig.~\ref{fig:add2excitons}, to obtain terms with operator content 
$e^{i(\phi_1^- -\phi_q^- + \phi_r^-)}$ ($r>q$), we needed to sum over terms from Eq.~\eqref{NPT} of exactly the same form as Eq.~\eqref{eq:A}. Here, however, permutations refer to the $N$ integers  $[1\,..\,q-1]\cup[q+N\,..\,r+N-1]\cup[r\,..\,N]$. Forming the operator $e^{i(\phi_1^- -\phi_q^- + \phi_r^-)}$ can be related to the formation of $e^{i\phi_1^-}$ by the following substitutions: $e^{i(\phi_{q} - \phi_{q+1})}\rightarrow e^{i(\phi_{q+N} -\phi_{q+N+1})},\cdots, e^{i(\phi_{r-1} -\phi_{r})} \rightarrow e^{i(\phi_{r+N-1} -\phi_{r+N})}$. To understand the implications of this substitution for the energy denominators, we examine the action of the involved operators on states in the high-energy subspace. Observe that
\begin{align}
&e^{i(\phi_{q} -\phi_{q+1})} \ket{
\begin{array}{cc}
n_{q}^{-}=n_{1}, & n_{q+1}^{-}=n_{2},\\
n_{q}^{+}=m_{1}, & n_{q+1}^{+}=m_{2}
\end{array}
} \\ \nonumber
&\qquad= \ket{
\begin{array}{cc}
n_{q}^{-}=n_{1}+\tfrac{1}{2}, & n_{q+1}^{-}=n_{2}-\tfrac{1}{2},\\
n_{q}^{+}=m_{1}+\tfrac{1}{2}, &n_{q+1}^{+}=m_{2}-\tfrac{1}{2}
\end{array}
},
\intertext{while}
& e^{i(\phi_{q+N} -\phi_{q+N+1})}\ket{
\begin{array}{cc}
n_{q}^{-}=n_{1}, &n_{q+1}^{-}=n_{2},\\
n_{q}^{+}=m_{1}, &n_{q+1}^{+}=m_{2}
\end{array}
}
\\ \nonumber
	&\qquad=\ket{
\begin{array}{cc}	
n_{q}^{-}=n_{1}-\tfrac{1}{2}, &n_{q+1}^{-}=n_{2}+\tfrac{1}{2},\\
n_{q}^{+}=m_{1}+\tfrac{1}{2}, &n_{q+1}^{+}=m_{2}-\tfrac{1}{2}
\end{array}
},
\end{align}
where $n_{i},m_{i}\in \mathbb{Z}$. The crucial insight from this is that the action of the substituted sister operators yields states that have the same agiton charge numbers, and differ only in the exciton charge numbers. Because the term in Eq.~\eqref{eq:KOpp} involving $\exp(i[\phi_{1}^{-}-\phi_{q}^{-}+\phi_{r}^{-}])$ is obtained via a substitution of these sister operators in Eq.~\eqref{eq:A} relative to the term involving $\exp(i\phi_{1}^{-})$, their energy denominators will be identical in the approximation that the exciton charging energies are neglected. Similarly, the operators representing the creation of 5, 7,$\ldots$ excitons that include the creation of an exciton on the first rung must all have the same coefficient as $\exp(i\phi_{1}^{-})$ by the same reasoning. For identical junction and ground capacitances, the rotational symmetry of the circuit is intact. Therefore, the coefficient of $\exp(i\phi_{j}^{-})$ for $j\neq 1$ must be the same as the coefficient of $\exp(i\phi_{1}^{-})$. Therefore all terms in Eq.~\eqref{eq:KOpp} must have the same coefficient.

From numerics for the selected parameter set, we find that $K$ has the functional form $K(N)=175$ GHz $\times$ $\exp(-1.59 N)$, quantifying the exponential suppression at large $N$.
To establish an analytical bound on $K$ we make one further approximation. Ignoring offset charge, we approximate the charging energy of two unpaired charges anywhere on the circuit by $2\Delta E=2E_{{\text C}0}^{+}-2E_{{\text C}1}^{+}$ (and for four unpaired charges by $4\Delta E = 4E_{{\text C}0}^{+}-4E_{{\text C}1}^{+}$, etc.), as opposed to the exact expression $2\Delta E_{j,k} = 2E_{{\text C}0}^{+} - 2(\mathsf{E_{C}^{+}})_{j,k}$ for a Cooper-pair hole on site $j$ and Cooper pair on site $k$. This yields an upper bound for $K$ because $(\mathsf{E_C^{+}})_{j,j\pm n}<E_{C1}^{+}<E_{C0}^{+}$ for $n\ge 2$. By counting all paths and extracting the approximate energy denominators, we find
\begin{equation}
\label{Kbound}
K(N) \leq 4\left(\frac{E_{J}}{2}\right)^N
        \frac{A_{N}}{(\Delta E)^{N-1}},
\end{equation}
where $A_N$ is observed to obey $A_{2}=1$ and $A_{N+1}=A_{N}(2N-1)/N$. This recursion relation can be solved using Pochhammer symbols, yielding
\begin{equation}
A_{N}=\frac{(1)_{2N-3}}{(2N-4)!!(2N-4)!}.
\end{equation}
As discussed in the main text, there are $2^{N-1}$ degeneracy breaking terms in Eq.~\eqref{KNcirc}. Therefore we are interested not just in a bound for $K(N)$, but for $2^{N-1}K(N)$; if $K(N)$ decreases slower than $1/2^{N-1}$, then a ground-state degeneracy does not develop at large $N$. Using the large $N$ expression $A_{N}\sim 2^{N-2}$, we find
\begin{equation}
\label{Kupperbound}
2^{N-1}K(N) < E_{J}\left(\frac{2E_{J}}{\Delta E}\right)^{N-1}.
\end{equation}
Therefore as long as $2E_{J}<\Delta E$, the degeneracy breaking terms disappear in the large-$N$ limit. The parameters used in this paper yield the energy scales $2E_{J}=38$ GHz and $\Delta E > 60$ GHz for $N\geq 4$, indicating that a ground state degeneracy should indeed develop. 

\bibliography{CM_PAPER_BIB}
\end{document}